\newcommand{\eqnsplit}[1]{\begin{equation*}\begin{split}
#1\end{split}\end{equation*}}
\newcommand\BibTeX{{\rmfamily B\kern-.05em \textsc{i\kern-.025em b}\kern-.08em
		T\kern-.1667em\lower.7ex\hbox{E}\kern-.125emX}}
\newcommand{\rem}[1]{}
\begin{document}

\runninghead{}

%\title{Platform Dynamics: On the instability of opinion diversity in competing social media environments}

\title{A dynamical model of platform choice and online segregation}

\author{Sven Banisch\affilnum{1}\affilnum{*}, Dennis Jacob\affilnum{2}, Tom Willaert\affilnum{3}, Eckehard Olbrich\affilnum{4}}
\affiliation{\affilnum{1} Institute of Technology Futures, Karlsruhe Institute of Technology, Germany\\
\affilnum{2} EECS Computer Science Division, UC Berkeley, Berkeley, CA, United States\\
\affilnum{3} Brussels School of Governance, Vrije Universiteit Brussel, Brussels, Belgium \\
\affilnum{4} Max Planck Institute for Mathematics in the Sciences, Leipzig, Germany\\
\affilnum{*} Corresponding author (sven.banisch@universecity.de)}

\begin{abstract}
%In recent years, understanding the dynamics of online choice in social media environments has become increasingly important due to its implications for online discourse and societal polarization.
In order to truly understand how social media might shape online discourses or contribute to societal polarization, we need refined models of platform choice, that is: models that help us understand why users prefer one social media platform over another.
This study develops a dynamic model of platform selection, extending Social Feedback Theory by incorporating multi-agent reinforcement learning to capture how user decisions are shaped by past rewards across different platforms. A key parameter ($\mu$) in the model governs users' tendencies to either seek approval from like-minded peers or engage with opposing views. Our findings reveal that online environments can evolve into suboptimal states characterized by polarized, strongly opinionated echo chambers, even when users prefer diverse perspectives. Interestingly, this polarizing state coexists with another equilibrium, where users gravitate toward a single dominant platform, marginalizing other platforms into extremity. Using agent-based simulations and dynamical systems analysis, our model underscores the complex interplay of user preferences and platform dynamics, offering insights into how digital spaces might be better managed to foster diverse discourse.
\end{abstract}

\keywords{social media, mathematical modeling, online behavior, online segregation, multi-agent learning, computational social science} 

\maketitle

%% Ideas for improvement (25.10.2024):
% Make the link to public sphere (D4.1) a little bit stronger throughout
% There are a few more papers that I would like to cite.
% Highlight the flexibility of the model framework to incorporate ...

% Michael: selbst in einer Welt ohne affektive Polarisierung ist Segrrgation nicht ausgeschlossen.
% Wir sehen die segregierte Welt auch ohne Hass
% Michael: einfach nur Liebe schaffen.... hilft nicht
% es ist eigentlich noch stärker dadurch das die Mienungen fix sind...

\section{Introduction}

%Digital platforms have a profound impact on society, influencing how people communicate, access information, and express opinions. In recent years, social media platforms have been accused of fostering polarization, the spread of misinformation, and mistrust in democratic institutions. But the turmoil around Twitter (now X) and the formation of super platforms such as Meta (Facebook, Instagram, WhatsApp, threads) has led into a new era where trust in social media platforms, their leaders, and their algorithms has eroded. Users have become more discerning, actively choosing platforms that align with their communication preferences.

Digital platforms have a profound impact on society, influencing how people communicate, access information, and express opinions \citep{boyd2014complicated}. While these platforms have become integral to daily life, they are also increasingly scrutinized for their role in fostering polarization, the spread of misinformation, and growing mistrust in democratic institutions \citep{sunstein2018republic, tucker2018polarization}. The recent turmoil surrounding Twitter (now X) 
%\citep{??} 
and the emergence of conglomerates like Meta—encompassing Facebook, Instagram, WhatsApp, and Threads—has marked a new phase of public skepticism toward social media platforms, their leadership, and the algorithms driving user experiences. 
%\citep{??}. 
In response, users are becoming more discerning, gravitating toward platforms that better align with their personal communication preferences \citep{brady2023social}.

% References to check and cite:
% , lorenz2023systematic
% Brady, W.J. et al. (2021) How social learning amplifies moral outrage expression in online social networks. Sci. Adv. 7, eabe5641
% Lindström, B. et al. (2021) A computational reward learning account of social media engagement. Nat. Commun. 12, 1311
% Kim, J.W. et al. (2021) The Distorting Prism of Social Media: How Self-Selection and Exposure to Incivility Fuel Online Comment Toxicity. J. Commun. 71, 922–946
% Vraga, E.K. et al. (2015) How individual sensitivities to disagreement shape youth political expression on Facebook. Comput. Hum. Behav. 45, 281–289

One of the primary concerns surrounding digital platforms is their role in reinforcing polarization and creating echo chambers, where users predominantly engage with like-minded individuals \citep{sunstein2018republic}. This selective exposure not only amplifies existing opinions but also fosters ideological divides and undermines cross-cutting discourse \citep{stroud2010polarization}.
A key driver of this phenomenon lies in the cultural norms and values embedded within each platform, which shape the kinds of opinions that are promoted or marginalized. Platforms differ in their normative structures -- ranging from rigid moderation policies to more open, decentralized environments -- resulting in distinct platform cultures \citep{scharlach2023value,raemdonck2022conceptueel}. Users actively navigate these environments, often selecting online spaces that align with their own communicative preferences, values, and even political identities \citep{garrett2009echo,knobloch2015choice}. At the same time, the market dynamics of platform ecosystems enable the constant emergence of new platforms, giving users the option to switch when a platform's technical features or culture no longer suit their needs. This fluid platform ecosystem raises important questions about how user choices, shaped by social feedback and platform-specific norms, contribute to the formation of either polarized spaces or diverse, open environments.

%Against this background, this paper develops a dynamic model of platform choice that delves into the complex interplay between users' evolving preferences and the broader platform ecosystem. Our theoretical model encompasses the co-evolutionary process in which users' decisions to express their opinions not only influence the platforms they engage with, but are also reciprocally shaped by the contributions and features of these platforms, collectively molding the overall platform landscape. By examining the dynamic feedback loops between user choices and emergent platform markets (see Figure \ref{fig:coevo}), our framework seeks to provide a user-centric understanding of how digital ecosystems evolve over time.

In particular, by means of their specific technical and social affordances, social media platforms might support key functions of the public sphere in liberal democracies. They might (1) facilitate the circulation of information and provide public access to information, (2) facilitate rational-critical debate among the public, (3) foster the creation and expression of collective identities, and (4) allow actors to coordinate and collectively perform actions \citep{willaert2024social}. When choosing which platforms to actively engage with, users might thus seek out those social media that best support the functions of the public sphere they value most highly. A user interested in belonging to a specific group or collective, might for instance look for those platforms that host like-minded individuals. On the other hand, users who wish to challenge their assumptions and expose themselves to new information, or wish to engage in rational debate with others, might be drawn to platforms that offer a diversity of perspectives and opinions. This dynamic selection process highlights the interplay between user motivations and platform affordances, underscoring how individual choices can shape the broader online discourse and, potentially, the structure of the digital public sphere.

To better understand the complex interplay between user behavior and online platform ecologies, this paper extends social feedback theory (SFT) by incorporating reinforcement learning (RL) to model adaptive user-platform interactions \citep{Banisch2019opinion,Banisch2022modeling}. In SFT, user behavior is shaped by the social rewards or feedback individuals receive from others, making it an ideal framework for understanding how social media platforms influence opinion expression and choice. From \cite{jacob2023polarization} we borrow an explicit representation of virtual environments. We refer to these different environments as platforms throughout the paper noticing that they might also relate to online spaces within a single large social network service. By integrating RL, we allow users to adapt their platform choices based on the rewards they have experienced, such as the affirmation of like-minded peers or exposure to opposing views. This approach captures the dynamic, evolving nature of online environments, where user preferences for different platforms co-evolve with the composition of the platform ecology (see Figure \ref{fig:coevo}). In contrast to models that assume static decision-making \citep{granovetter1978threshold,schelling1971dynamic}, our approach emphasizes how feedback loops between many user actions and platforms can lead to different system-wide outcomes -- ranging from polarized echo chambers to diverse, open spaces -- depending on the balance between social approval and diversity-seeking behaviors. In this way, our paper provides a more focused understanding of how 
%political opinions and platform preferences co-develop, offering insights into how 
platform markets and user behavior influence the evolution of polarized environments or more inclusive spaces.

\begin{figure}[ht]
	\centering
	\includegraphics[width=0.99\linewidth]{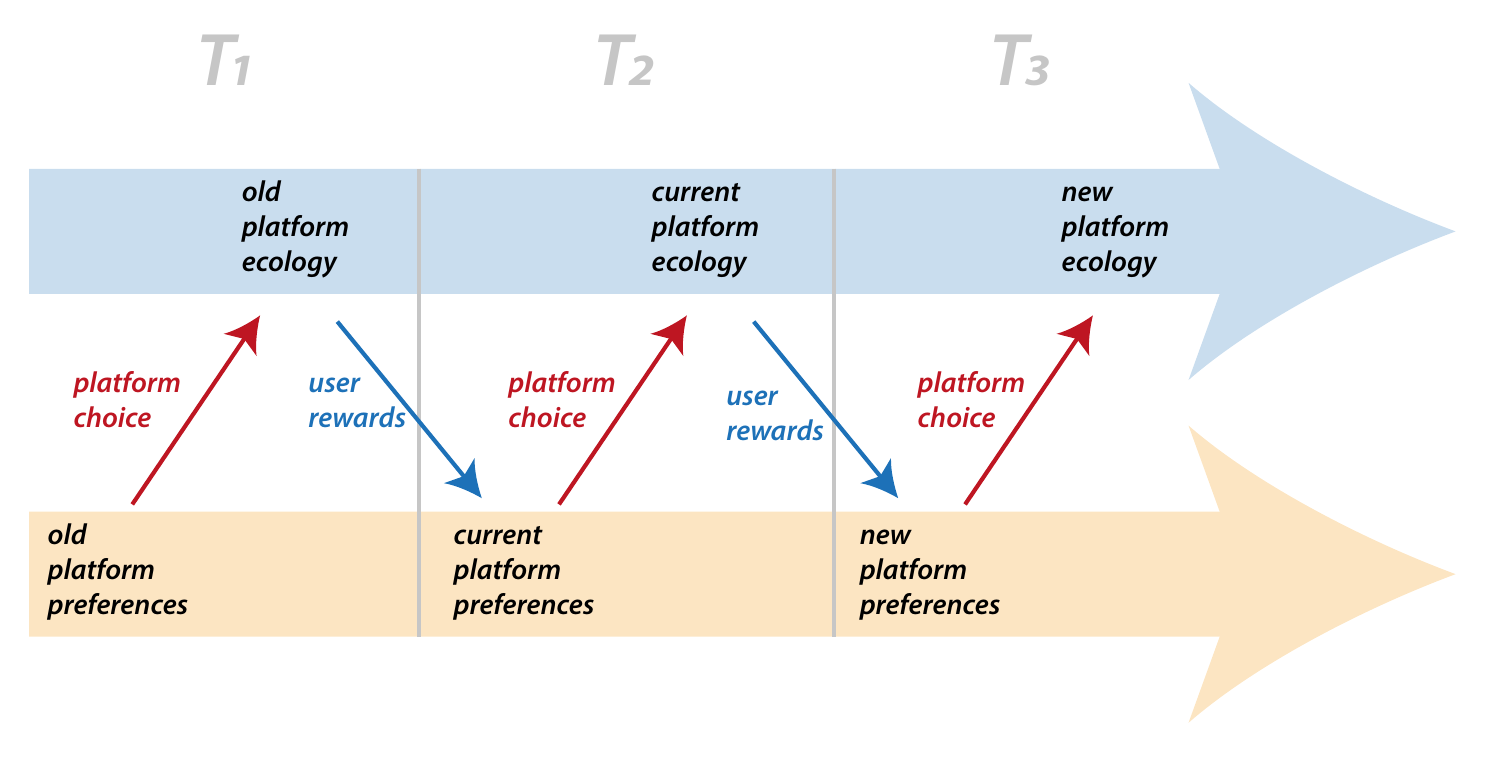}
	\caption{Platform choice models address the co-evolutionary process of users’ preferences and decisions to engage with different platforms, and the resulting platform ecology in which these decisions take place.}
	\label{fig:coevo}
\end{figure}

The model simulates a population of users, each holding one of two opposing opinions, who choose between a set of platforms for expressing their views. Users are influenced by two primary rewards: social approval, which is derived from interacting with like-minded individuals, and diversity, which comes from engaging with opposing views. Each platform is characterized by the proportion of users holding each opinion and the level of activity on the platform, which affects user satisfaction. A key parameter $\mu$ governs the relative importance users place on social approval versus diversity. Through RL, users adapt their platform choices over time, updating their platform preferences based on the rewards they experience. This leads to dynamic changes in platform composition as users gravitate toward platforms that maximize their satisfaction. As the simulation evolves, we observe the formation of two distinct platform configurations. On the one hand, the simulation forms polarized echo chambers, where users cluster around like-minded peers.
On the other hand, we see the emergence of a single very large platform where different opinions co-exist and other strongly opinionated platforms are marginal. Remarkably, both configurations are stable in a specific parameter range $\mu$.
Our model thus captures the feedback loops between individual behavior and platform dynamics, revealing how small changes in user preferences or platform features can lead to vastly different systemic outcomes.

These results echo a key insight from Schelling’s seminal work on segregation \citep{schelling1969models,schelling1971dynamic}, which demonstrated how individual motivations can aggregate into unintended macro-level outcomes. In Schelling’s model, even mild preferences for like-minded neighbors led to sharp spatial segregation, even though individuals did not explicitly seek such extremes. Like Schelling’s work, our model highlights how individual motivations -- in our case, for social approval or diversity -- aggregate to produce collective outcomes such as polarized platforms or diverse spaces. The model parameter, $\mu$, governs the balance between social approval and diversity rewards for the individual users. When $\mu < 0.5$, polarization is the dominant outcome as users prioritize social approval. As $\mu$ increases beyond 0.5, the system favors global diversity as the optimal outcome, yet polarization persists as the only stable outcome for a range of $\mu$ values between 0.5 and 0.64. 
Strong motivations for diversity  ($0.64 < \mu < 0.875$) eventually favor a globally optimal equilibrium, where a single mega-platform features both opinion groups and the remaining platforms are marginalized to extremity. Nevertheless, even in this setting suboptimal equilibria of online segregation remain a stable outcome of the model (for $\mu < 0.79$).
These findings underscore how feedback loops between user behavior and platform characteristics can drive polarization, even when users genuinely value diversity, recalling Schelling’s demonstration of how small individual preferences for like-minded others can lead to unexpected systemic segregation.

This paper makes several key contributions to the study of platform dynamics and user behavior on social media. First, this paper extends SFT to model how users adapt their communication choices in a dynamic platform environment. Unlike most models of polarization that focus on opinion dynamics within fixed networks \citep{flache2017models,Banisch2019opinion,lorenz2021individual}, our model explicitly captures how users select platforms based on social feedback. By simulating the evolving interplay between user preferences and platform features, the model provides a novel framework for studying how online segregation emerges from individual communication decisions.

Second, we conduct a rigorous mathematical analysis of the "tipping points" in the model using bifurcation analysis. This approach identifies critical transitions in system behavior, specifically how variations in the balance between social approval and diversity-seeking behavior (governed by parameter $\mu$) lead to different platform outcomes -- ranging from polarized echo chambers to diverse spaces. This analysis offers new insights into the parameter regimes where polarization persists, and where more diverse, inclusive environments can emerge, providing a theoretical basis for understanding platform segregation and diversity dynamics.

Third, the model is highly flexible and can accommodate a wide range of attributes, beyond political or ideological stances. "Opinion" in the model can refer to cultural values, norms, or even preferences regarding communication standards, such as the toxicity of language. Furthermore, the model can be applied to posting decisions within platforms like Reddit, where users choose between different communities (called "subreddits") based on feedback such as upvotes or comments. This flexibility makes the modeling framework presented in this paper a versatile tool for fine-tuning and calibrating models to specific online settings and for exploring how platform design and user incentives interact across different contexts. 
In this paper, we focus on rigorous theoretical understanding of a simple model with fixed binary opinions and homogeneous motivations $\mu$ to explain basic dynamics of online segregation. Future model iteration attuned to real platforms will provide actionable insights for platform designers and policymakers aiming to foster more inclusive digital spaces.

\section{Agent-Based Model}
\label{sec:binarymodel}

%\todo[inline]{This paragraph needs reframing. I would like to have here a more general notion of opinion that may related to cultural traits, identity,... We should discuss that!}

In the model, we assume that a numbers of $N$ users (called agents) decide to engage in opinion expression activity on $M$ different social media platforms. Different platforms are indexed as $k \in \{0,1,2,\ldots M \}$. We further assume that agents hold two opposing opinions ($o_i \in \{-1,1\}$) on a single issue. These opinions do not change in the dynamical process, dividing the population into $N_+$ agents holding opinion 1 and $N_-$ agents with opinion -1. We may hence also think of the $o_i$'s as a cultural attribute or identity in the spirit of \cite{schelling1971dynamic} or \cite{tornberg2021modeling}.

%Note that these are very strong assumption concerning the opinion distributions of users. On the other hand, we may also think of $o_i$ as a variable that signals an agents' identity.

In subsequent rounds, agents choose to express their opinion on their currently preferred platform and evaluate that platform based on the social feedback obtained within it (called rewards). Agents can also refrain from opinion expression, if they prefer remaining silent over choosing to voice their opinion on one of the platforms available to them. This first "environment" is indexed by $0$ capturing the decision to be silent in the current round. In the next paragraphs, we discuss the model steps one by one.

\paragraph{Platform choice.} 
Following \cite{Banisch2019opinion}, our model uses reinforcement learning (RL) to model adaptive agent behavior. Specifically, in this model, preferences for platforms (or silence) are represented via subjective expected utilities (i.e., $Q$-values) associated to the different options among which agents can choose.
%Our model is based on reinforcement learning (RL) in which agents decide for a platform, or silence, based on the subjective expected utilities $Q$ associated to the different options. 
These evaluations are updated round by round based on past experiences (i.e., rewards). Let \( Q_i(k) \) denote the Q-value agent \( i \) associates with platform \( k \). Agents (indexed by \( i \)) choose platform \( k \) with a probability derived from the Boltzmann soft-max procedure, a method commonly used in decision-making models where options with higher expected rewards are chosen with higher probabilities. This approach balances exploration and exploitation, allowing agents to occasionally choose less-optimal platforms to explore new possibilities. The probability of choosing platform \( k \) is given by:
\begin{equation}
    p_k^i = \frac{e^{\beta Q_i(k)} }{\sum_{l = 0}^M e^{\beta Q_i(l)}}.
    \label{eq:actionselection}
\end{equation}
The parameter \( \beta \), known as the exploration rate, governs the balance between randomness and rationality in decision-making. A low value of \( \beta = 0 \) means that agents make random choices, while a high \( \beta = \infty \) means that agents always select the option with the highest expected reward (fully rational choice).
%These evaluations are updated round by round based on past experiences (i.e. rewards). Let us denote by $Q_i(k)$ the Q-value agent $i$ associates with platform $k$. Agents (indexed by $i$) choose platform $k$ with a probability
%\begin{equation}
%    p_k^i = \frac{e^{\beta Q_i(k)} }{\sum_{l = 0}^M e^{\beta Q_i(l)}}
%    \label{eq:actionselection}
%\end{equation}
%which corresponds to the common Boltzmann soft-max procedure for action selection. This scheme is often chosen as a realistic one because options are explored based on the preference ranking (first option highest probability, second-best option second highest probability, and so on). The exploration rate is governed by $\beta$ such that $\beta = 0$ corresponds to random choice, and $\beta = \infty$ to a strictly rational choice of the best option (i.e. hard max).
Recall that the first "environment" $k=0$ represents the option of refraining from opinion expression, and we assume that $Q_i(0) = 0$ for all agents throughout the entire process. Platform choice $c_i$ is governed by Eq. (\ref{eq:actionselection}) such that $c_i = k$ with probability $p_k^i$.

\paragraph{Platform activity and opinion.} After all agents made their choices, we compute the activity $A_k$ and opinion $O_k$ of each platform. 
The activity of platform $k$ is defined as the number of users expressing their opinion on platform \( k \) in a given round
\begin{equation}
    A_k = \sum_{i = 1}^{N} \delta(c_i = k)
    \label{eq:platformactivity}
\end{equation}
where \( \delta(c_i = k) \) is an indicator function that equals 1 if agent \( i \) chose platform \( k \), and 0 otherwise.

The overall opinion \( O_k \) on platform \( k \) represents the average opinion expressed by users on that platform. It is computed as the mean opinion of all active users on the platform in a given round:
\begin{equation}
    O_k = \frac{1}{A_k} \sum_{i = 1}^{N} \delta(c_i = k) o_i
    \label{eq:platformopinion}
\end{equation}
where $o_i$ represents the opinion of agent $i$.
Platform activity and opinion are the most central variables for analyzing the temporal evolution of the ABM. 

% \todo[inline]{Tom. Can you try to improve the following two paragraphs based on your comment?}

\paragraph{Rewards.} 
%By means of their specific technical and social affordances, social media platforms might support key functions of the public sphere in liberal democracies. They might (1) facilitate the circulation of information and provide public access to information, (2) facilitate rational-critical debate among the public, (3) foster the creation and expression of collective identities, and (4) allow actors to coordinate and collectively perform actions \citep{willaert2024social}. When choosing which platforms to actively engage with, users might thus seek out those social media that best support the functions of the public sphere they value most highly. A user interested in belonging to a specific group or collective, might for instance look for those platforms that host like-minded individuals. On the other hand, users who wish to challenge their assumptions and expose themselves to new information, or wish to engage in rational debate with others, might be drawn to platforms that offer a diversity of perspectives and opinions. 

%%% I stopped here with the last iteration

In the model, users have two distinct motivations that make platform engagement appealing: (a) social approval of their expressed opinions and (b) exposure to opinion diversity on the platform. These incentives reflect the dual motivations that align with broader public sphere functions, where users may either seek affirmation within like-minded groups or pursue diverse perspectives. We represent these motivations through two types of rewards. The first reward (a) assumes that agents perceive social support for their opinion as positive reinforcement. The average opinion on the chosen platform \(O_k\) reflects the likelihood of an agent receiving affirmative responses, capturing this social feedback as follows:
\begin{equation}
    r_a^i(c_i = k) = O_k o_i.
    \label{eq:rewardapproval}
\end{equation}
Accordingly, when the average platform opinion \(O_k\) is negative, an agent with a negative opinion will receive positive feedback, while an agent with \(o_i = +1\) will experience negative feedback. 

%In the model, users have two different incentives that make platform engagement attractive: (a.) social approval of expressed opinions, and (b.) opinion diversity on the platform. We integrate this by a mixture of two kinds of rewards. The first one (a) assumes that agents consider support for their opinion positively. The average opinion on the chosen platform $O_k$ indicates how likely it is that an agent received affirmative responses so that this social feedback signal is captured by 
%\begin{equation}
%    r_a^i(c_i = k) = O_k o_i.
%    \label{eq:rewardapproval}
%\end{equation}
%When the average platform opinion is negative, an agent with a negative opinion will experience interaction positively while an agent with $o_i = +1$ will draw negative feedback. 

But agents are motivated not only by support and the affirmation of seeing their opinions echoed. A platform might also lose appeal if there is insufficient opinion diversity, as a lack of contrasting views reduces the potential for argumentation and meaningful discussion. This second type of reward (b) represents the value agents place on encountering diverse perspectives, and is captured by
\begin{equation}
    r_d^i(c_i = k) = 1 - |O_k|,
\end{equation}
which decreases linearly with the opinionatedness of the platform in the current round. This diversity reward reaches a maximum of one when the platform consists of a balanced mix of agents from both opinion groups and drops to zero if all agents share the same opinion. In this way, $r_d$ reflects the reward agents gain from a diverse online environment in terms of an increasing reward by exposure to the opinion of others.

%Note that this is a rather crude measure of diversity.
%But given our stylized setting of fixed binary opinions, it captures the effect of platform diversity in terms of an increasing rewards of being exposed to the opinion of others.

We combine these two motivations introducing the basic model parameter $\mu$ by
\begin{equation}
    r_k^i = (1-\mu) r_a^i + \mu r_d^i.
    \label{eq:rewards}
\end{equation}
The parameter \( \mu \) models the tradeoff between seeking social approval and opinion diversity as a convex combination of the two reward contributions $r_a$ and $r_d$. When \( \mu = 0 \), agents care exclusively about social approval (homophily), while when \( \mu = 1 \), they prioritize engaging with diverse perspectives. Intermediate values of \( \mu \) capture a balanced social preference for both approval and diversity.

\paragraph{Agent update.} The basic idea of social feedback models is that agents learn from the rewards obtained in one round and adjust their Q-values for the next round accordingly. This can be realized by Q-learning where the value $Q(k)$ of a chosen platform is adjusted by
\begin{equation}
    Q_i^{t+1}(k) = Q_i^{t}(k) + \alpha ( r_k^i - Q_i^{t}(k) ).
    \label{eq:Qlearning}
\end{equation}
That is, the new evaluation $Q_i^{t+1}(k)$ is adjusted by the difference between the experienced reward $r_k^i$ and the current evaluation $Q_i^{t}(k)$, typically referred to as temporal difference error. 
The learning rate \( \alpha \) determines how strongly agents update their evaluation of a platform based on recent experiences. A low \( \alpha \) means that agents are slow to adapt their preferences, while a high \( \alpha \) makes them more responsive to new feedback.

\paragraph{Platform rewards.}
In addition to the platform activity $A_k$ and opinion $O_k$ \eqref{eq:platformactivity} and \eqref{eq:platformopinion}, we also keep track of the rewards agents obtain on average from a given platform. In analogy to the previous observable, we define the platform reward as
\begin{equation}
    R_k = \frac{1}{A_k} \sum_{i = 1}^{N} \delta(c_i = k) r_k^i
    \label{eq:platformreward}
\end{equation}
as the average reward agents that engage with platform $k$ receive in the current round. The platform rewards thus trace how satisfied users are on a certain platform and reflect the extent to which a platform market serves the needs of users.

\section{Simulations}

\subsection{Model phenomenology}

\paragraph{Simulation setting.} Round by round, all agents choose one out of $M$ platforms based on an evaluation $Q_i(k)$. For the simulation within this section, we use $N = 500$ agents that can chose among $M = 5$ different platforms. The simulation is run for 5000 rounds with the parameters $\alpha = 0.1, \beta = 8$ fixed. At start, all Q-values are set to 0 so that all platforms and silence are chosen with equal probability. Opinions $o_i \in \{-1,1\}$ are assigned at random so that, on average, half the population supports one and the other half the other opinion ($N_+ \approx 250, N_- \approx 250$). Our variable of interest is the preference for diversity $\mu$.

\paragraph{Preference for positive feedback only ($\mu = 0$).}
Let us first consider the extreme cases, starting with agents who derive reward from seeing their opinion supported on the platform and do not strive for diversity. This corresponds to $\mu = 0$ in Eq. (\ref{eq:rewards}). In Fig. \ref{fig:M00} we show an exemplary run for this case. The evolution of the normalized platform activity for the five platforms ($A_k(t)/N$) is shown in the top panel, and the platform opinion $O_k(t)$ is shown below. The proportion of silent agents is also shown in the activity plot by the black curve (very close to zero).

\begin{figure}[ht]
	\centering
	\includegraphics[width=0.99\linewidth]{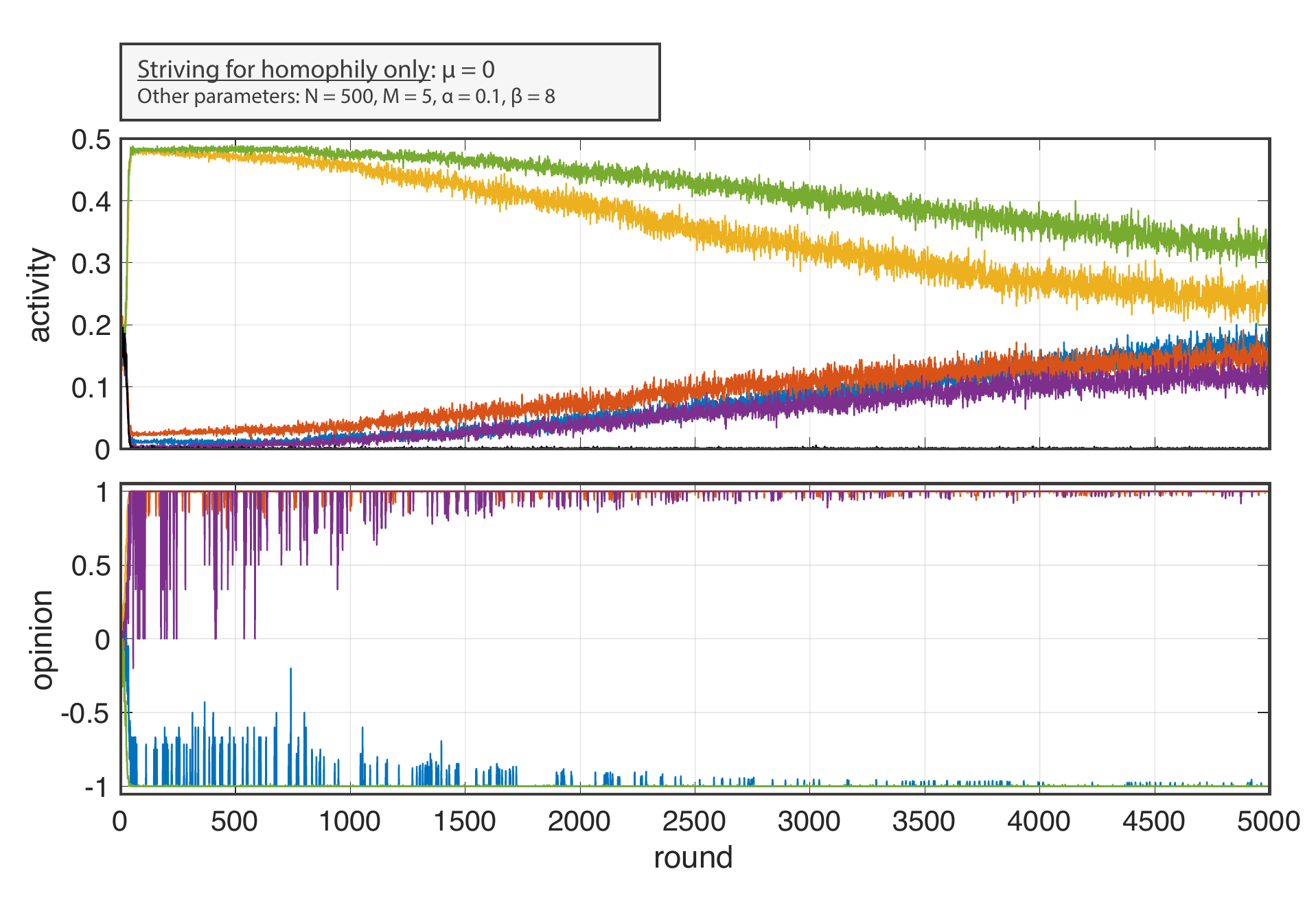}
	\caption{Example run for $\mu = 0.0$ which shows platform polarization due to homophily, with opinions clustering on separate platforms. The evolution of the normalized platform activity for the five platforms (\( A_k(t)/N \)) is shown in the top panel, and the platform opinion (\( O_k(t) \)) is shown below. The proportion of silent agents is also shown in the activity plot by the black curve (very close to zero). Other parameters: $N = 500, M = 5, \alpha = 0.1, \beta = 8$.}
	\label{fig:M00}
\end{figure}

In this setting, the system quickly approaches a state where agents with a positive opinion meet on three platforms (yellow, orange and violet) and agent with a negative opinion gather on the remaining two (green and blue). We shall refer to this as platform segregation or polarization, as agents with opposing opinions have a very low chance of interacting on one and the same platform. Notice in the activity plot that two platforms initially attract a large user share so that almost all agents prefer one of the two. In particular, agents with a negative stance gather on the green platform and agent with $o_i = 1$ on the yellow one. The remaining platforms, though strongly opinionated from the beginning, are very low in activity, but they attract more and more users as time evolves. 
%Eventually, the convergence of Q-values causes agents to choose among platforms that share their opinion uniformly at random (i.e., 1/4 for...)
Eventually, in this run, platform activity will converge to a stationary profile in which agents with the same opinion choose the respective platforms on their side with equal probability ($\approx 1/4$ for green and blue, and $\approx 1/6$ for yellow, orange and violet). The convergence of Q-values causes agents to choose among platforms that share their opinion uniformly at random.
%Note that the Q-values approach equal values which means that all agents on one side will choose out of the platforms at this side with equal probability. 
Finally, the number of silent agent (i.e. the probability to refrain from opinion expression) is very low.

Notice further that we may observe model realizations in which only a single platform emerges on one side of the opinion spectrum, and the other four share users on the other side (not shown). The activity then approaches a state in which half on the users (i.e. those with the respective opinion) gather on one platform, whereas the other opinion group engages at equal share on the remaining ones ($\approx 1/8$).

\paragraph{Striving for diversity only ($\mu = 1$).}
The second limiting case is when agent only opt for diversity, and draw no reward from seeing their own opinion backed by others. That is, $\mu = 1$. An example run of this situation is shown in Fig. \ref{fig:M10}.

\begin{figure}[ht]
	\centering
	\includegraphics[width=0.99\linewidth]{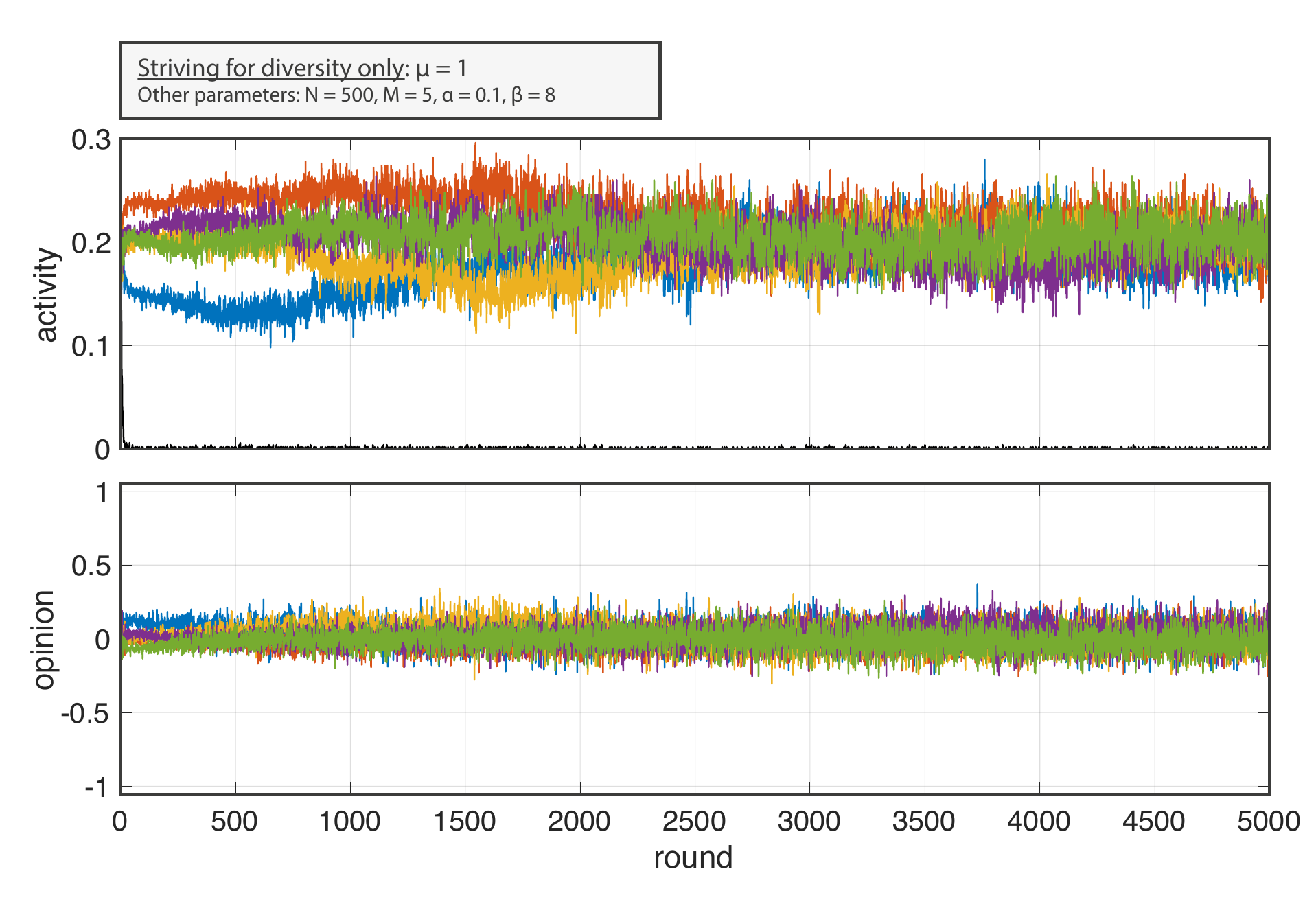}
	\caption{Example run for $\mu = 1.0$, illustrating balanced platform activity with diverse, non-opinionated platforms when agents prioritize diversity. In this parameter setting, all platforms approach an average opinion \( O_k \approx 0 \) and are hence not opinionated. Other parameters: $N = 500, M = 5, \alpha = 0.1, \beta = 8$.}
	\label{fig:M10}
\end{figure}

In this parameter setting, we observe that all platforms approach an average opinion $O_k \approx 0$ and are hence not opinionated. The activity approach a value of $A_k/N \approx 1/5$ meaning that agents do not strongly prefer one platform over another. Again, we remark that the Q-values are approximately equal for all $k$ options so that the platform choice probabilities are equal. The number of silent agents is very low.

\paragraph{Striving for diversity and homophily ($\mu = 0.7$).}
A series of interesting phenomena may emerge in between these two limiting cases when mixing rewards from positive feedback (affirmation) and diversity by the parameter $\mu \in [0,1]$. We illustrate this by looking at a parameter value $\mu = 0.7$ by which agents draw a considerable reward from interaction with agents from the other group, but also value approval. 
In Fig. \ref{fig:M07polar} and \ref{fig:M07mega} we compare two model runs with $\mu = 0.7$. We show the first 1000 rounds of the model. 

%This means that agents individually gain more from interaction with other-minded versus like-minded others. 

The parameter $\mu = 0.7$ is chosen because the model can enter two very different equilibrium states. Even under the same initial assignment of Q-values (all $Q_k^0 = 0$), the stochastic process implemented by the model may either approach a state of platform polarization (as with $\mu = 0$), or a state in which one maximally diverse platform dominates over four marginal but opinionated platforms. 

\begin{figure}[ht]
	\centering
	\includegraphics[width=0.99\linewidth]{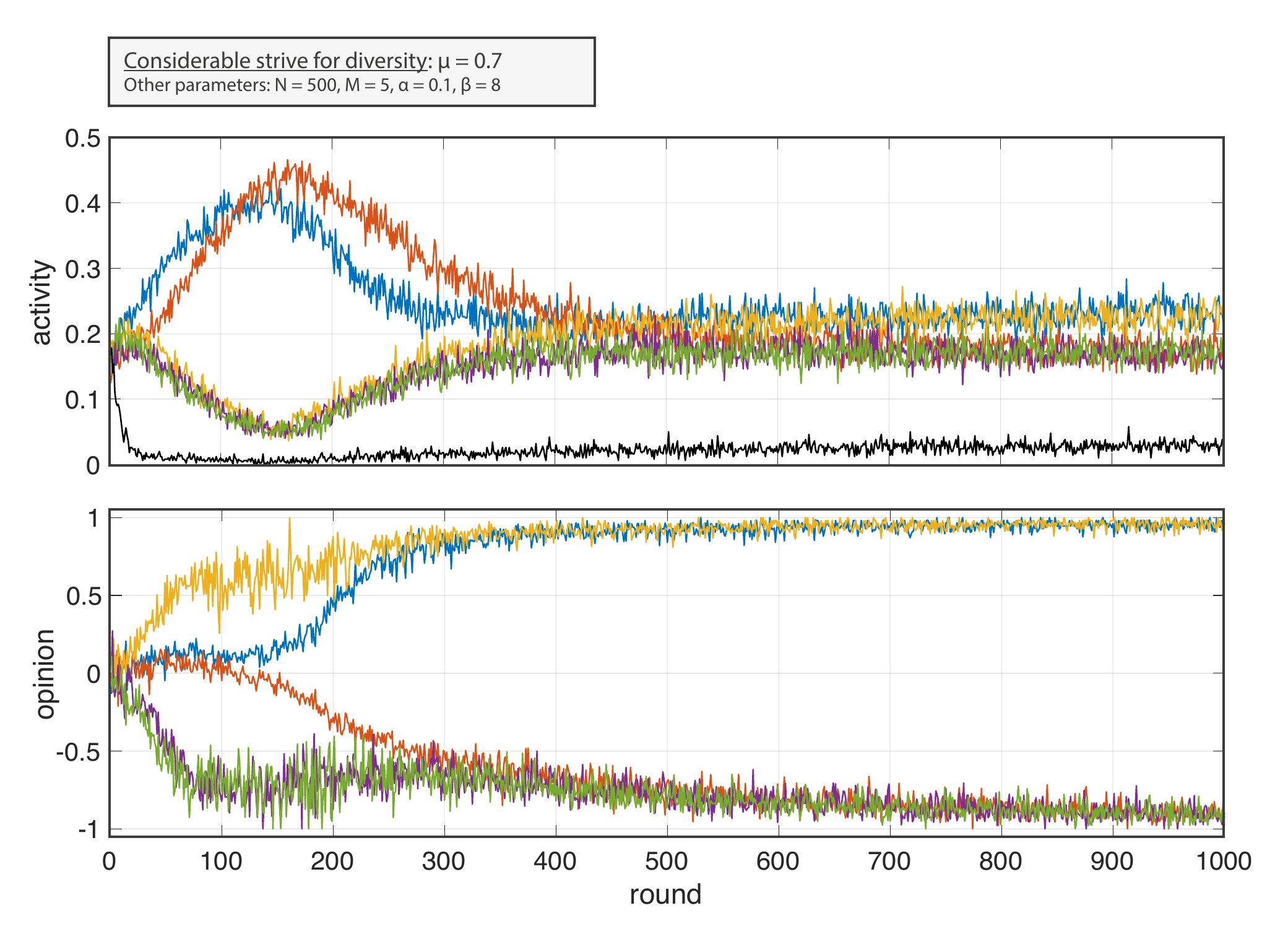}
	\caption{First run for $\mu = 0.7$, demonstrating polarization after an initial phase of platform competition (referred to as Run A). Other parameters: $N = 500, M = 5, \alpha = 0.1, \beta = 8$.}
	\label{fig:M07polar}
\end{figure}

In the first realization (run A), platform polarization is observed after an initial phase in which two platforms compete to provide a diverse space. 
For up to 150 rounds, the orange and the blue platform remain in the middle of the opinion scale. This means that an approximately equal share of users from both opinion groups express their opinion on them.
These platforms are also the most active during that phase.
But, in this run, this situation becomes unstable so that one of the platforms is drawn to the positive and the other one to the negative side.
After 400 rounds, no platform affords opinion diversity and the activity levels converge. In this run, two platforms (yellow and blue) are populated by agents with a positive opinions, and three platforms emerge on the negative side. Consequently, their activity levels are around 1/4 and 1/6 respectively. However, notice that in this setting, the probability of silence is also clearly above zero. 

\begin{figure}[ht]
	\centering
	\includegraphics[width=0.99\linewidth]{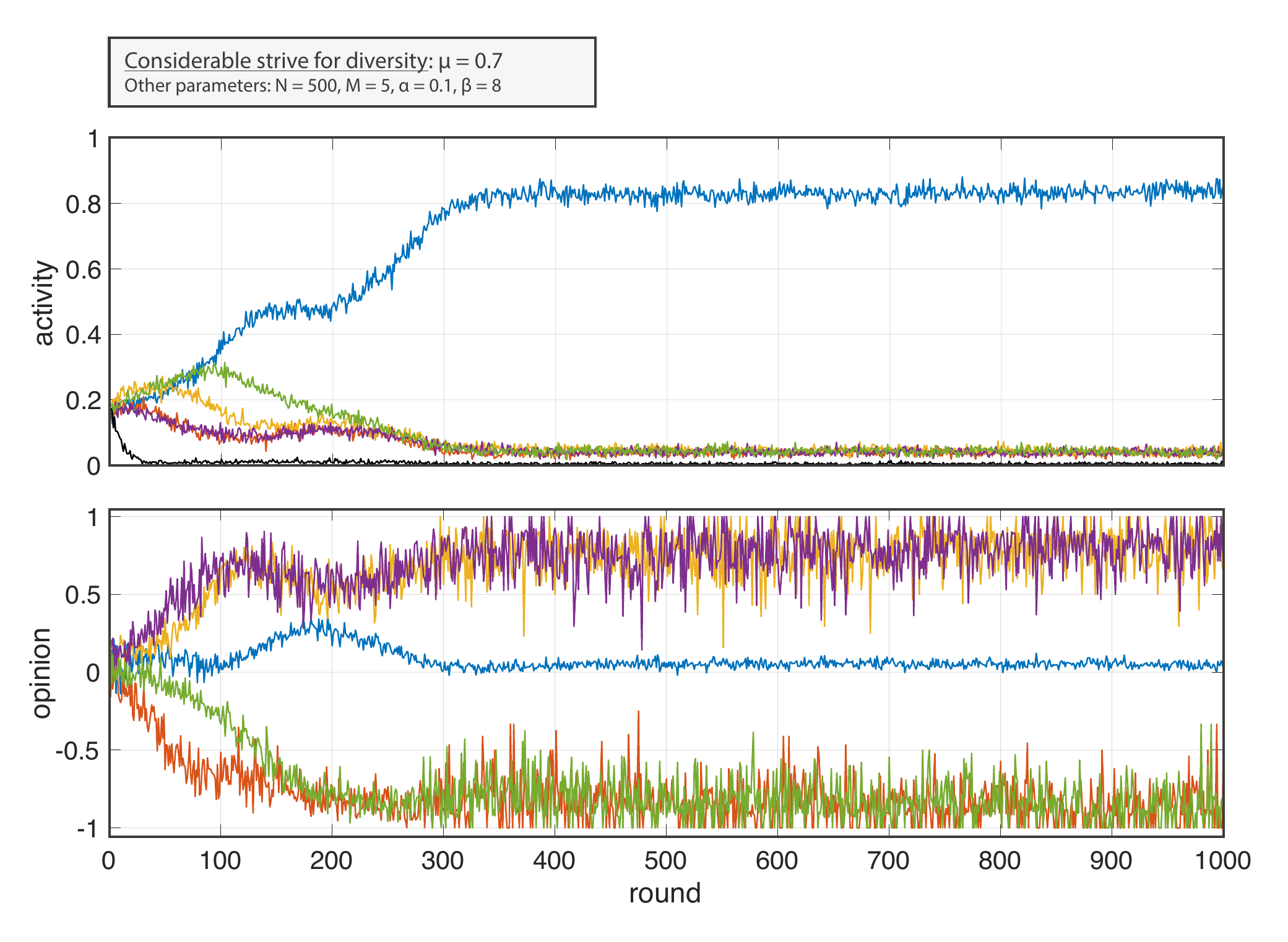}
	\caption{Second run for $\mu = 0.7$, showing the emergence of a single diverse platform (mega-platform) while others become marginalized (referred to as Run B). Other parameters: $N = 500, M = 5, \alpha = 0.1, \beta = 8$.}
	\label{fig:M07mega}
\end{figure}

The second run (run B) shows a qualitatively different trend. Here, a single platform (blue) provides a space for interaction for agents with opposing opinions. After a short period in which the platform becomes more populated by agents with a positive opinion (rounds 100 to 300), the platform stabilizes as a space on which agents from both sides express opinions. 
With an activity of more than 80\%, the blue platform grows very large and we shall refer to this as a mega-platform. 
The remaining platforms are all clearly on one side of the opinion scale connecting agents with equal opinions. But they remain marginal, as their activity converges to a low level.

\paragraph{Complex transient dynamics.}
The model exhibits rather complex transient phenomena which is shown by a realization with a slightly increased $\mu = 0.75$ in Fig. \ref{fig:N500M5B8M075}.
As before, this run features the emergence of a mega-platform on which around 80\% of activity takes place.
However, the process leading to this final stationary situation is highly non-trivial. First, two highly active platform featuring diverse opinions emerge (green and violet in this case). One platform (orange) stabilizes on the negative and the remaining two (blue and yellow) on the positive side. Then both most active platforms tend to opposing opinion sides affording less diversity. Only the green platform survives this competition and stabilizes at a high level of over 80 \% of activity.

\begin{figure}[ht]
	\centering
	\includegraphics[width=0.99\linewidth]{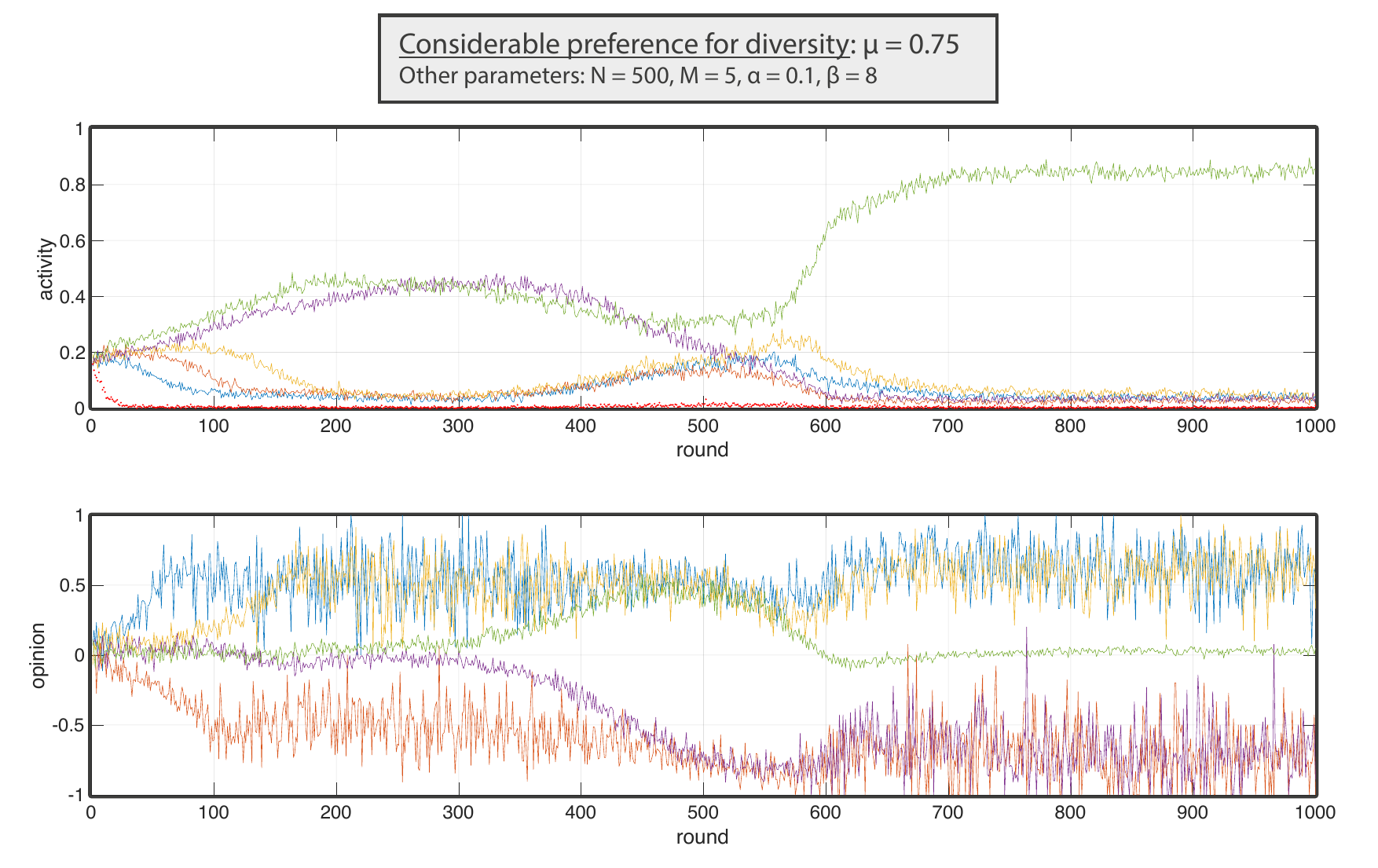}
	\caption{An example run which highlights complex transitions, with diversity initially emerging before one dominant platform forms (\( \mu = 0.75 \)). Other parameters: $N = 500, M = 5, \alpha = 0.1, \beta = 8$.}
	\label{fig:N500M5B8M075}
\end{figure}

\subsection{Platform rewards}
\label{sec:platformrewards}

In this section, we return to the two model realizations with $\mu = 0.7$ and compare the two qualitatively different outcomes with respect to how rewarding the emergent platform market is for users.
We base our definitions on the actual user experiences in the simulation, tracking the rewards agents obtained during in each round (Eq. \ref{eq:platformreward}).
%in form of the observed rewards obtained from opinion expression in the simulation.
For each single platform, the upper two panels in Fig. \ref{fig:M07rewards} show the average reward agents experience when active on the respective platform.

\begin{figure}[ht]
	\centering
	\includegraphics[width=0.99\linewidth]{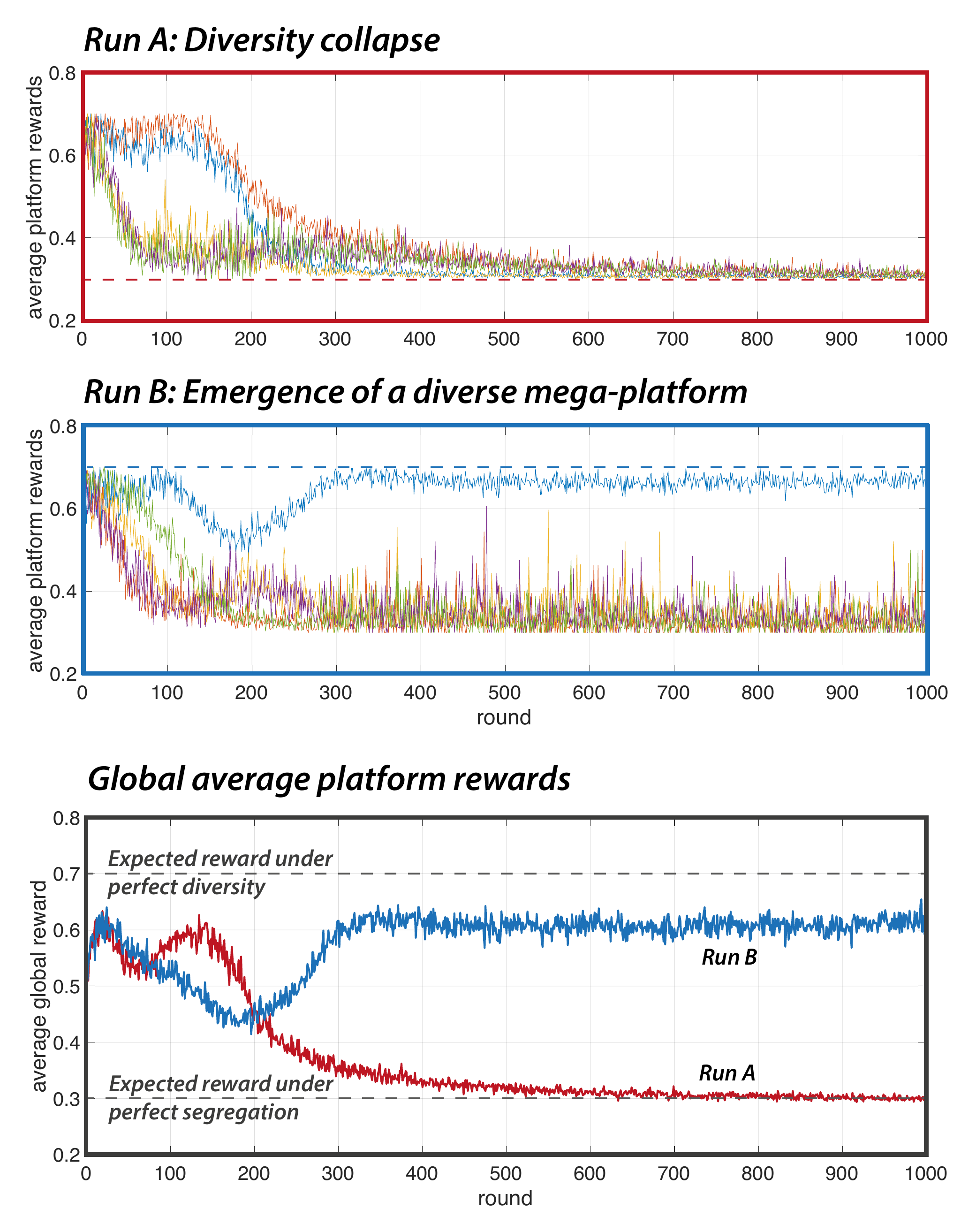}
	\caption{Comparing the two runs with $\mu = 0.7$ with respect to how rewarding the emergent platform market is for users. The dashed lines mark the expected rewards given $\mu = 0.7$ under the extreme (equilibrium) conditions of perfect diversity (global optimum) and complete segregation (suboptimal equilibrium).}
	\label{fig:M07rewards}
\end{figure}

Run A shows the system's tendency to segregate into polarized echo chambers, where agents only interact with like-minded peers and receive low rewards.
Two initially diverse platforms collapse after an initial period, user rewards approach a value of $r_k = 0.3 = (1-\mu)$ on all platforms. This corresponds to the reward that would be expected in a segregated and polarized platform market. Agents only encounter like-minded others and draw a reward of $(1-\mu)$ from these interactions. 

In contrast, Run B illustrates the emergence of a single dominant, diverse platform, where agents benefit from cross-opinion interactions, resulting in higher rewards.
On the single mega-platform, the experienced reward is close to $r_k \approx 0.7 = \mu$. This value corresponds to the theoretically expected reward if all agents meet in perfectly balanced platforms.
The remaining four platforms are marginal, but strongly opinionated so that opinion expression on them features rewards $r_k \approx 0.3$ of social approval.

This user-centric perspective on the degree to which emergent platform markets serve the needs of agents allows for a qualification of different platform constellations into optimal and suboptimal equilibria. 
Clearly, if $\mu = 0$ all agents strongly prefer interaction with agents of the same opinion. Hence, agents are maximally satisfied if they meet in homogeneous echo chambers that emerge under this parameter. More precisely, if $0 \leq \mu \leq 0.5$ in Eq. \ref{eq:rewards}, agents experience a polarized platform market maximally positive and can expect a reward $E^*_{\mu \leq 0.5}(r) = (1-\mu)$ from engaging in opinion expression.
For values $\mu > 0.5$, on the contrary, the optimal outcome would be a platform market of perfect diversity. The maximally expected reward if all agents meet in perfectly balanced online spaces is $E^*_{\mu > 0.5}(r) = \mu$.
Perfect diversity would be the optimal solution under the parameter value which lead to run A and B.
And, as is clearly visible in the bottom panel of Fig. \ref{fig:M07rewards}, when $\mu = 0.7$ agents are on average significantly more satisfied by a market with a single mega-platform that enables inter-opinion exchanges (run B) compared to a segregated situation (run A). 

\section{Analytical characterization}
\label{sec:odeformulation}

The previous section has shown that the platform choice model exhibits complex dynamics and features a set of different final equilibrium states at different parameters values $\mu$. In this section, we aim at a qualitative understanding of the global behavior of the model based on dynamical systems theory.

\subsection{Model formulation}
\label{subsec:odederivation}

We first describe how to analytically formulate the model dynamics as a system of differential equations. For this purpose, we study the model with binary opinions ($o_i \in \{-1,1\}$), as in Section \ref{sec:binarymodel}, but restrict to two competing platforms ($M = 2$). Silence is not allowed reducing the number of available options (possible actions) to two. We analyze the dynamics at a group level. That is, we assume homogeneity over the group of supporters ($\forall i$ with $o_i = 1$) and the opposing opinion group ($\forall i$ with $o=-1$).

\paragraph{4D System.}
The system dynamics is then described by four Q-values $Q(k | o )$ with platforms $k \in \{1,2\}$ and opinions $o \in \{-1,1\}$. $Q(1|1)$, for instance, denotes the value agents with $o=1$ associate with platform 1. For further convenience we shall denote this as $Q^+_1 = Q(1|1)$. Likewise, $Q(1|-1) = Q^-_1$ is the Q-value the other opinion group with $o = -1$ assigns to platform 1. Correspondingly, we write for platform two $Q^+_2$ and $Q^-_2$. 

In each round, the opinion groups choose one of the platforms by a soft max that compares the associated Q-values ($Q^+_1$ and $Q^+_2$ for agents with opinion +1). That is,
\begin{equation}
    \begin{aligned}
    p^+_1 & = \frac{1}{1 + e^{\beta  (Q^+_2-Q^+_1)}} \\      
    p^+_2 & = \frac{1}{1 + e^{\beta  (Q^+_1-Q^+_2)}}
    \end{aligned}
    \label{eq:B:selectionprobsplus}
\end{equation}
for the group with $o = +1$, and equivalently for group $o = -1$ by replacing the superscript $+$ with $-$.

The two platform opinions $O_1$ and $O_2$ can be computed on that basis by
\begin{equation}
    O_k = \frac{p^+_k - p^-_k}{p^+_k + p^-_k},
    \label{eq:B:platformopinion}
\end{equation}
where $p^+_k - p^-_k$ captures the relative prevalence of one opinion over the other and $p^+_k + p^-_k$ is the platform activity $A_k$.
The dynamics of the system are then defined by a 4D system of differential equations%\footnote{Note that I skip many details.}
\begin{equation}
    \begin{aligned}
    \dot{Q}^+_1 & = \alpha p^+_1 \Big( r^+_1 - Q^+_1 \Big)    \\
    \dot{Q}^+_2 & = \alpha p^+_2 \Big( r^+_2 - Q^+_2 \Big)      \\
    \dot{Q}^-_1 & = \alpha p^-_1 \Big( r^-_1 - Q^-_1 \Big)    \\
    \dot{Q}^-_2 & = \alpha p^-_2 \Big( r^-_2 - Q^-_2 \Big),      
    \end{aligned}
    \label{eq:B:4Dsystem}
\end{equation}
where $\alpha$ is the learning rate.
The rewards are given as in Section \ref{sec:binarymodel} by
\begin{eqnarray}
    r^+_k = (1-\mu) O_k + \mu (1 - | O_k |) \\
    %\label{eq:B:rewardplus}
    r^-_k = - (1-\mu) O_k + \mu (1 - | O_k |)
    \label{eq:B:rewardsDE}
\end{eqnarray}

\paragraph{2D System.}
We can further reduce the system by looking at the differences of Q-values for each opinion group. For this purpose, we define
\begin{equation}
    \begin{aligned}
    \Delta Q^+ & = Q^+_2 - Q^+_1   \\
    \Delta Q^- & = Q^-_2 - Q^-_1
    \end{aligned}
\end{equation}
capturing to what extent both opinion groups respectively prefer platform 2 over platform 1. We exploit the fact that the selection probabilities \eqref{eq:actionselection} depend only on $\Delta Q$.
With that, the platform opinion $O_k$ \eqref{eq:B:platformopinion} and the expected rewards $r^+_k, r^-_k$ \eqref{eq:B:rewardsDE} can be computed as before.

In order to close the dynamical equations in terms of $\Delta Q^+$ and $\Delta Q^-$, we have to make an assumption which differs from the 4D case and the ABM. Namely, we skip the frequency dependence on the selection probabilities $p^+_1,p^+_2,p^-_1,p^-_2$ in \eqref{eq:B:4Dsystem}. This implicitly assumes that agents see both rewards independent of their actual choice.\footnote{This assumption does not have a qualitative impact on the behavior of the DE system and fits with an ABM of the two-platform setting. Moreover, if we include into the ABM that agents see the reward they would have obtained in the non-chosen environment (counter-factual learning), we obtain a quantitative match between analytical formulation and the simulation model.}
With this, we obtain
\begin{equation}
    \begin{aligned}
    \dot{Q}^+_2 - \dot{Q}^+_1 & = \dot{\Delta Q}^+ = \alpha \Big( r^+_2 - r^+_1 - \Delta Q^+ \Big)    \\
    \dot{Q}^-_2 - \dot{Q}^-_1 & = \dot{\Delta Q}^- = \alpha \Big( r^-_2 - r^-_1 - \Delta Q^- \Big).
    \end{aligned}
    \label{eq:B:2Dsystem}
\end{equation}

The formulation of the model in form of a system of coupled differential equations provides a rather complete understanding of the model dynamics including the equilibrium states (fixed points) and the impact of initial conditions. A global understanding of the impact of $\mu$ can be obtained with bifurcation analysis, recovering the complex structure of critical transitions observed in the simulations. 

\subsection{Phase portraits and ABM comparison}

We will first look at the phase portrait (or phase plot) of the 2D system for a given $\mu$. A phase portrait shows for any combination $(\Delta Q^+,\Delta Q^-)$ in which direction the dynamical process will evolve under the model rules. 
Let us briefly recall how to interpret the two evolving variables $(\Delta Q^+,\Delta Q^-)$. If $\Delta Q^+$ is positive, this means that positive opinion group ($o = 1$) favors platform 2 over 1, whereas a negative value indicates that platform 1 is favored. Likewise, the other opinion group ($o = -1$) prefers platform 2 if $\Delta Q^- > 0$. If $\Delta Q^+ < 0$ and $\Delta Q^- < 0$, for instance, both groups prefer posting on platform 1. The dynamical system \eqref{eq:B:2Dsystem} describes how these relative preferences evolve in time.
In Fig. \ref{fig:abm_phaseplot}, we show these temporal changes as a vector plot for $\mu = 0.7$, where simulations revealed the co-existence of two kinds of equilibria.

Note that phase plots show how the system would evolve for any specific initial condition. In a numerical simulation model, generating such a global picture would require a large number of repeated simulations with systematically varied initial conditions. Moreover, phase plots enable a graphical solution with respect to the different equilibrium states to which the model may settle. To identify these equilibria, we visualize in Fig. \ref{fig:abm_phaseplot} the null-clines with $\dot{\Delta Q}^+ = 0$ and $\dot{\Delta Q}^- = 0$ superimposed on the phase plot. Their intersections indicate the fixed points of the system, representing configurations in which no further change is expected. These fixed points can be either stable or unstable; stability is assessed through the Jacobian matrix, as detailed in Appendix \ref{sec:appendixa}.

\begin{figure}[ht]
	\centering
	\includegraphics[width=0.99\linewidth]{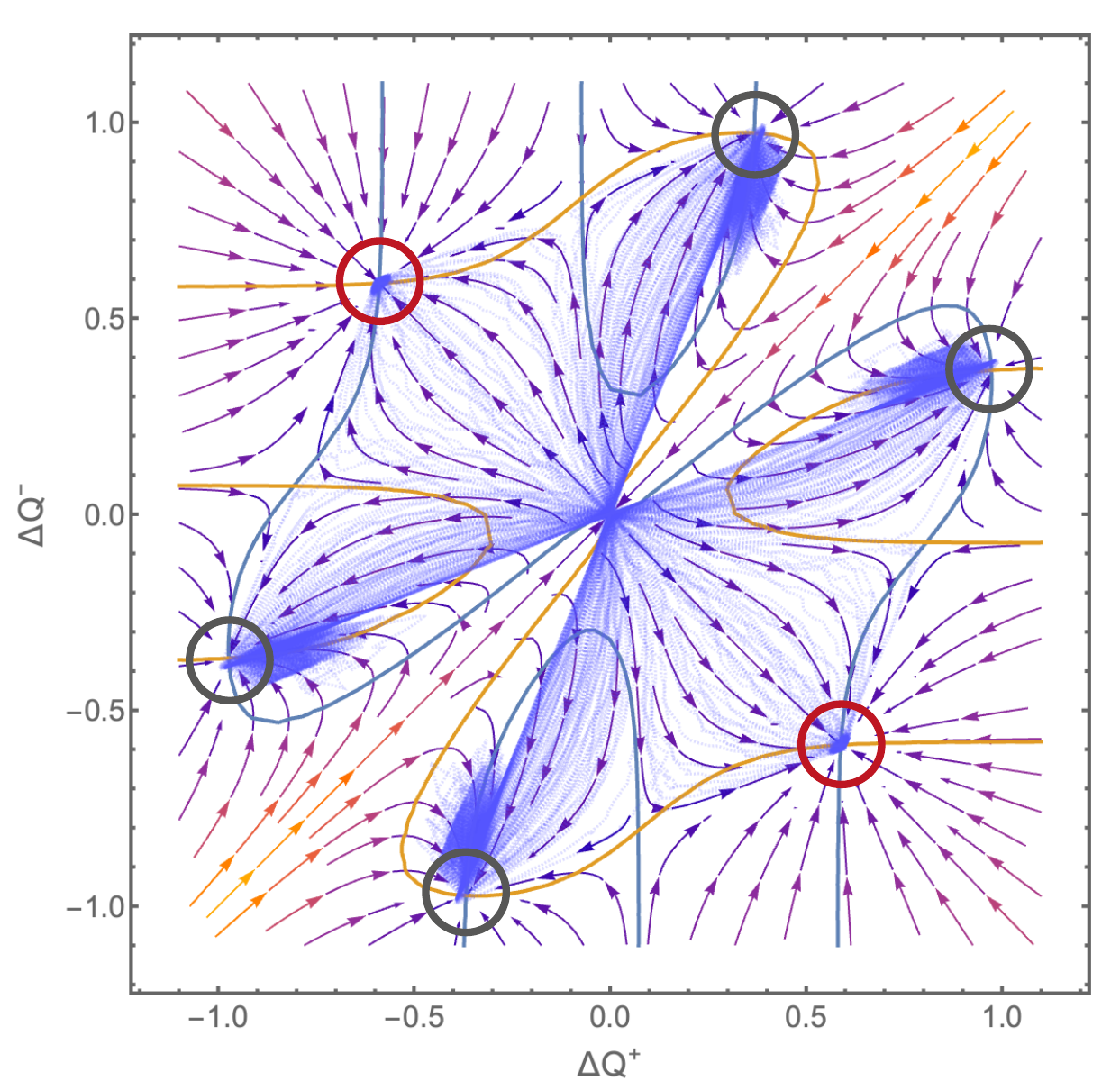}
	\caption{Phase portrait of the analytical model and comparison with 500 trajectories of an aligned ABM for $\mu = 0.7$ and $\beta = 8$. The blue and yellow curves show the null-clines of the two differential equations \eqref{eq:B:2Dsystem}. The fixed points correspond to the intersections of these curves and the stable ones are marked by the circles (red for platform segregation, gray for a single dominant platform). In blue, $500$ independently generated ABM trajectories are super-imposed. All model realizations are initialized at $(\Delta Q^+,\Delta Q^-)=(0, 0)$ and run for $1000$ rounds. Note that all runs terminate near stable fixed points and the six different equilibria are visited with non-zero probability.}
	\label{fig:abm_phaseplot}
\end{figure}

This analysis accurately recovers the complex structure of equilibria observed for $\mu = 0.7$ in the simulation section. There are six stable points for the analytical two-platform model. Four of them correspond to the mega-platform constellation where both opinion groups strongly prefer the same platform, marginalizing others to extremity. In our case of two platforms, the secondary platform is also strongly opinionated, because only one group interacts on it at a significant rate, while the other group avoids it altogether. This is reflected in the symmetry of the fixed points with respect to the diagonal. The remaining two stable fixed points, marked red, correspond to equilibria where one group prefers platform 1 and the other group platform 2 (i.e. $\Delta Q^+ < 0$ and $\Delta Q^- > 0$ or $\Delta Q^- < 0$ and $\Delta Q^+ > 0$). 

To provide further justification that the analytical formulation of the models captures ABM dynamics we also show 500 realizations of an ABM with two platforms. The model has been aligned, omitting silence and including counter-factual learning as assumed when reducing from the 4D to the 2D version of the analytical model. All model runs are started with all Q-values at zero so that the two platforms are chosen with equal probability in the beginning. To project the simulations into the 2D phase plot, we measure the mean $\Delta Q_i^+$ and $\Delta Q_i^-$ over the two opinion groups. We observed that all trajectories quickly leave this initial point at the center $(\Delta Q^+,\Delta Q^-)=(0, 0)$ and approach one of the stable fixed points. Interestingly, due to the stochasticity of the ABM (exploration) all equilibria are met by some realizations. 

\subsection{The global picture}

\begin{figure*}[ht]
	\centering
	\includegraphics[width=0.99\linewidth]{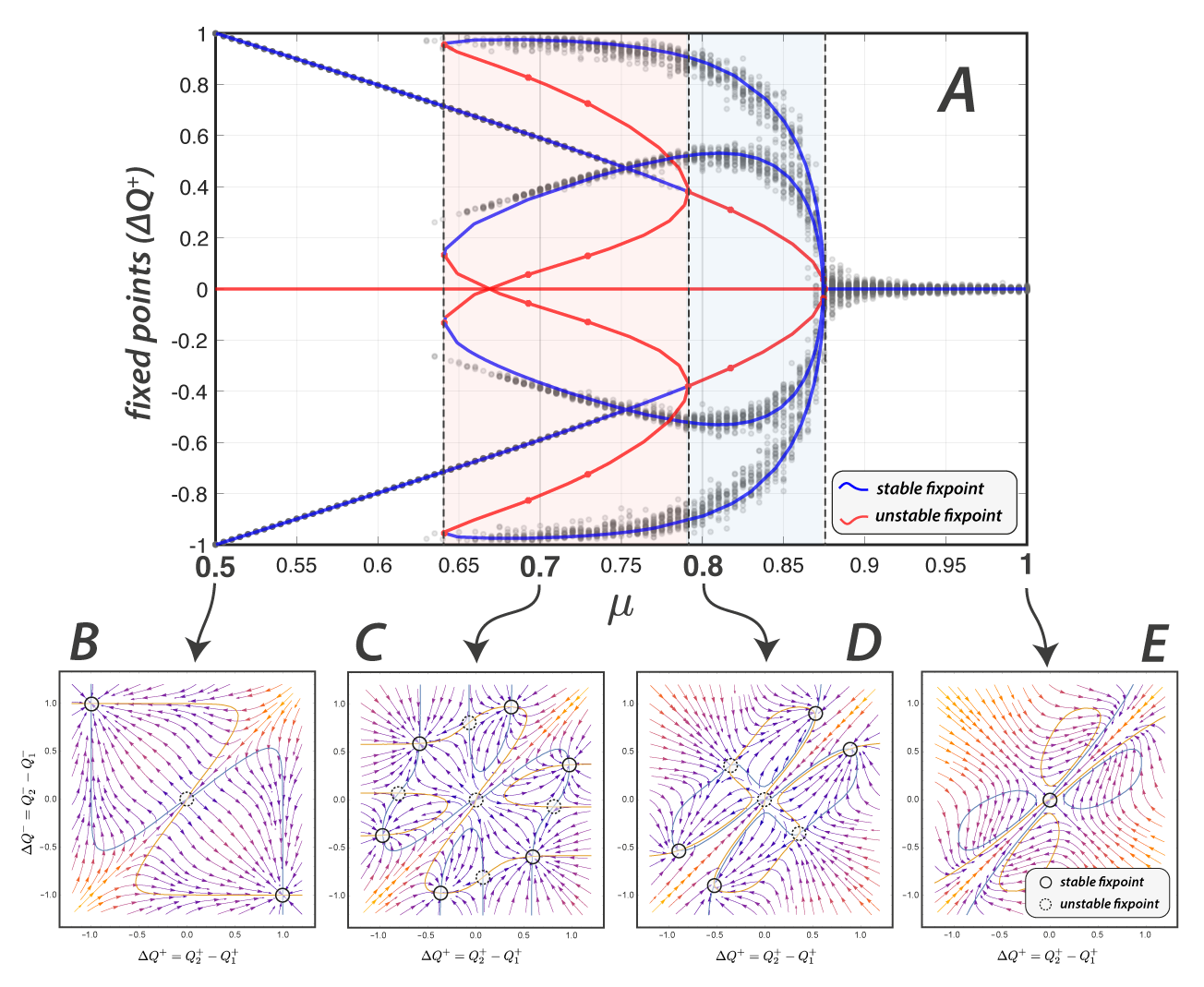}
	\caption{
 Global overview of the model behavior. 
 Panel A: 
 Bifurcation diagram tracing the fixed points as $\mu$ increases from 0.5 to 1. Blue corresponds to stable fixed points, red corresponds to unstable ones. Data from ABM simulations is show by the gray dots. Each point shows the $\Delta Q$'s in the stable state of a single model run (after $T = 1000$ steps, temporal average over the last 100 steps). For each $\mu \in [0.5,0.505,\ldots,1.0]$, 100 model runs with $N = 1000$ agents are shown
 ($10100$ realizations). The matching between simulations and theory is remarkable.
 Panel B,C,D and E: Phase portrait of the system for four different values of $\mu$. The blue and yellow curves show the null-clines of the two differential equations \eqref{eq:B:2Dsystem}. The fixed points correspond to the intersections of these curves and are marked by the black circles (solid for stable, dashed for unstable fixed points).
  }
	\label{fig:globalsystemanalysis}
\end{figure*}

While phase plots shows how the platform choice model behaves for a given $\mu$, bifurcation analysis provides a global overview of the system transitions as the balance between social approval and diversity changes. Bifurcations mark the critical parameter values $\mu$ at which the global system dynamics undergoes qualitative changes. System transitions -- sometimes called tipping points -- are characterized by the emergence of new fixed points, a change of their stability or vanishing ones. We leverage the MatCont software package \citep{dhooge2008new} which works with a symbolic representing the ODE system \eqref{eq:B:2Dsystem} and run an associated ODE solver.
The result is shown in the main Panel A of Fig. \ref{fig:globalsystemanalysis} by the blue and red curves, corresponding to stable and unstable equilibria respectively.\footnote{Notice that we only visualize $\Delta Q^+$ in the bifurcation plot because the system is symmetric and the plot for $\Delta Q^-$ would be equivalent. See Appendix \ref{sec:appendixb}.}

We again compare this analysis with a computational experiment performing systematic simulations of the aligned two-platform ABM. For each $\mu \in [0.5,0.505,\ldots,1.0]$ (101 sample points), we run 100 simulations with $N = 1000$ agents. We measure the emergent stable state after $T = 1000$ rounds in terms of the mean $\Delta Q_i^+$ and $\Delta Q_i^-$ defined over the two opinion groups during the last 100 rounds. The final $\Delta Q$'s for all 10100 model realizations are shown by the gray dots. We observe a remarkable matching between the ABM and the results of the bifurcation analysis. 

Beyond the excellent fit with the aligned ABM, we find that the analytical model captures the series of non-trivial phenomena observed in the simulation of the full model as the alignment-diversity balance increases towards more diversity-seeking behavior. For this purpose, a series of phase portraits for specific values of $\mu$ are shown in Panels B to E of Fig. \ref{fig:globalsystemanalysis}. These values are chosen as representatives of qualitatively different model regimes as revealed by the bifurcation analysis.

\paragraph{Suboptimal platform segregation.}
In Panel B, the phase portrait for $\mu = 0.5$ is shown. There are three fixed points. The first one at $(0,0)$ is unstable so that the system is driven away from it towards one of the two stable fixed points at around $(-1,1)$ and $(1,-1)$. In these states, the $+$ group strongly prefers one platform whereas the $-$ group prefers the other. In $(-1,1)$, for instance, $Q^+_1$ is large compared to $Q^+_2$ so that the $+$ group interacts on platform 1 with a probability close to one. Conversely, $\Delta Q^- = 1$ indicates a strong preference of the $-$ group for platform 2. We are hence in a situation of segregation where platforms polarize into strongly opinionated echo chambers.

As shown in Section \ref{sec:platformrewards}, this situation (equilibrium) is actually suboptimal as soon as $\mu > 0.5$. However, up to values of $\mu \approx 0.64$ these two equilibria are the only stable ones. This shows that the platform choice model will evolve into a suboptimal state of a segregated platform market despite the fact that all agents individually would be more satisfied by interaction on diverse online spaces in which they encounter opposing views.

\paragraph{The emergence of a second non-trivial attractor.}
As $\mu$ crosses a critical value of 0.64 (here exemplified by $\mu = 0.7$ in Panel C), the system bifurcates into a different qualitative regime and a series of new equilibria appears.  There are four additional fixed points in which both opinion groups prefer the same platform to a slightly different degree. This nicely corresponds to the emergence of a mega-platform, on which agents with both opinions meet, as shown in the simulations section. There are four different such states in the two-platform version of the model. First, platform 1 or 2 may end up being chosen by both groups. Second, the remaining marginal platform may become populated predominantly by either the $+$ or the $-$ group. Hence, as in the simulations, this second platform is far less active, but strongly opinionated. The emergence of a stable state in which one diverse platform dominates the others is one of the most surprising features of the platform choice model.

\paragraph{Coexistence of multiple equilibria.}
Still, the fixed points associated to platform polarization remain stable up to a value of $\mu = 0.875$. In this parameter regime, highlighted in Panel A by a light red shade, platform segregation and the emergence of a mega-platform coexist. We have observed this multi-stability in the aligned two-platform ABM as well as in the full model with several platforms. In this case (i.e. for $0.64 < \mu < 0.79$), the evolution of the model will strongly depend on initial conditions. Moreover, we might observe hysteresis: once the system enters a suboptimal attractor (platform segregation), it may be very difficult to return. On the other hand, the coexistence of these two equilibria provides ground for non-structural interventions such as promoting one platform as an space open to all groups, aiming to raise expectations ($Q$ values) for one platform.

\paragraph{Dominance as the only equilibrium.}
As $\mu$ crosses a value of 0.79 (exemplified by $\mu = 0.8$ in Panel D), the stable fixed points associated to platform segregation become unstable. The emergence of a single diverse mega-platform is the only stable outcome in this parameter regime ($0.79 < \mu < 0.875$). 

As far as we know, this state is unique to our model addressing segregation in a dynamic online environment. The  precise mechanism that leads to the survival of only a single diverse platform deserves further analysis.

\paragraph{Indifference.}
Finally, for $\mu > 0.875$ the fixed points associated with the mega-platform profile disappear and collapse into the fixed point at the center.  $(0,0)$ becomes the only stable solution. This is exemplified in Panel E of Fig. \ref{fig:globalsystemanalysis} for $\mu = 1$ which had also been considered in the simulation section as a limiting case. $Q$-values converge to the same values in the long run, meaning that agents are indifferent with respect to which platform they prefer. All platforms will be chosen with equal probability.

\section{Discussion}

Our study builds upon Social Feedback Theory (SFT) which has been proposed as a framework for modeling how user behavior in online environments is shaped by social reinforcement \citep{Banisch2019opinion, Banisch2022modeling}.
In repeated communication games agents express their opinions and receive feedback such as likes and downvotes from peers. 
While previous models developed within this paradigm \citep{Banisch2019opinion,jacob2023polarization,konovalova2023social,lefebvre2024roots} have focused on how users adapt their opinions based on the social feedback they receive -- suggesting that social reinforcement may be the primary mechanism responsible for increasing polarization \citep{lefebvre2024roots} and extreme opinion expression \citep{konovalova2023social} -- this work takes a different perspective. It focuses on the process of online community selection; an orthogonal but understudied facet of opinion dynamics which is especially relevant in the digital age. 

In contrast to traditional opinion dynamics models, which primarily focus on how opinions evolve through direct interactions within fixed networks, our approach emphasizes the role of platform choice as a dynamic and strategic decision made by users. For instance, models like those discussed by \cite{friedkin2011social}, \cite{flache2017models} and \cite{lorenz2021individual} typically assume that opinions change as users interact with others in predefined social networks, where exposure to like-minded or opposing views directly influences opinion shifts. 
These models often focus on opinion convergence or polarization within a single platform or network, without considering the broader context of multiple competing platforms. Our model departs from these frameworks by shifting the focus from opinion change to the choice of where to express opinions. 
This distinction is particularly relevant in an online context, where users have the freedom to navigate between platforms and adapt their behavior and online consumption based on the feedback they receive. By modeling how users with given opinions actively decide which platform to use based on rewards such as social approval or exposure to diverse views, we capture a key dynamic that is missing in many conventional opinion dynamics models addressing online phenomena \citep{maes2015will,keijzer2018communication,keijzer2022complex,kozitsin2022formal}. This shift allows us to explore the interplay between user behavior and platform structures, offering new insights into how polarization and diversity can emerge across multiple platforms in an online ecosystem, rather than within a single, closed environment.

%The possibility to choose among multiple platforms has been introduced in SFT in \cite{jacob2023polarization} who showed 

%While \cite{Banisch2019opinion} effectively models how users adapt their opinions based on the social feedback they receive, it primarily examines opinion dynamics within a single platform.

%JASSS paper leverages social feedback theory for multiple platforms, but does not explicitly allow for users' platform preferences to dynamically evolve over time

Our findings resonate with Schelling’s segregation model, which demonstrated how individual preferences can lead to unintended collective outcomes, such as spatial segregation \citep{schelling1969models,schelling1971dynamic}. In Schelling’s model, even mild preferences for homophily resulted in sharp social divisions. Similarly, our model shows that even when users actively seek diverse interactions, online environments can still evolve into polarized, echo-chamber-like spaces. However, the model also reveals an unexpected equilibrium where users become active on a single large platform, marginalizing other platforms into extreme opinion clusters.
This state of equilibrium reflects how, under certain parameter values (such as a moderate $\mu$), users gravitate toward a platform that offers a sufficient mix of affirmation and diversity, potentially creating a "mega-platform" where the main discourse occurs, while alternative platforms become polarized and isolated.
This co-existence of multiple qualitatuively different equilibria, not found in earlier models, highlights the importance of the co-evolutionary dynamics between individual user behavior and macroscopic platform composition, adding new dimensions to the study of digital polarization.
The presence of these equilibria also suggests a strong path dependence and potential for hysteresis, where initial conditions and past states significantly shape the system's long-term outcomes.

%\todo[inline]{To cite: Lorenz-Spreen motives twitter}

While our model offers valuable insights into the dynamics of user behavior and platform choice, several limitations must be acknowledged. First, the model simplifies the complexity of user preferences by focusing primarily on two factors: social approval and exposure to diversity. In reality, users may be influenced by a broader range of factors, such as economic incentives, platform-specific features (e.g., ease of use, privacy concerns), or algorithmic transparency \citep{lorenzspreen2024real}. Studies have shown that users' decisions to switch platforms may also be driven by factors like community norms or platform governance \citep{gillespie2018custodians}. Incorporating a richer set of user motivations and platform characteristics could lead to more nuanced insights into the co-evolution of user behavior and platform ecosystems.

Furthermore, our model focuses on a homogeneous population of users divided along a single dimension (e.g., political opinion). In real-world online environments, users hold a diversity of opinions across multiple issues, which interact in complex ways \citep{converse2006nature,baldassarri2014neither,dellaposta2020pluralistic}. By modeling users as holding just one opinion or identity, we may overlook the potential for cross-cutting interactions across different issues, which can either exacerbate or mitigate polarization \cite{banisch2021argument,baumann2021emergence,camargo2020new}. 
The computational framework is flexible enough to expand the model to account for users’ multi-dimensional identities -- including political, cultural, and social preferences \citep{baldassarri2007dynamics}. In particular, it allows to directly incorporate heterogeneous user populations with opinions measured in surveys or on online social networking data \citep{gaumont2018reconstruction,ramaciotti2022inferring,peralta2024multidimensional}.
While complicating the computational analysis, designing empirical scenarios that account for more complexity at the level of users would allow for a more comprehensive understanding of how users engage with different platforms.

Finally, while our model focuses on competition between multiple platforms, it does not fully capture the influence of platform algorithms beyond simple reward mechanisms. Real-world platforms employ complex algorithms that curate content in highly personalized ways \citep{bakshy2015exposure}. Incorporating more detailed models of algorithmic curation -- such as a user feed -- would allow for a deeper exploration of the role these algorithms play in shaping opinion dynamics and platform choice. This would also provide more actionable insights for platform designers seeking to mitigate polarization while maintaining user engagement \citep{lazer2018science,bhadani2022political}.

Nevertheless, the findings of our model have implications for both the design and management of digital platforms, as well as for broader societal concerns about online polarization and discourse. One of the key insights from our study is the coexistence of multiple equilibrium outcomes: platforms can evolve into polarized, echo-chamber-like environments or, alternatively, a single dominant platform can emerge where opposing opinions coexist, while smaller platforms become marginalized. This highlights the critical role of platform design and algorithmic moderation in shaping these outcomes.
In practice, algorithms that prioritize user engagement -- often by amplifying content that resonates strongly with existing user preferences -- can contribute to the formation of polarized spaces. When users seek out platforms that align closely with their beliefs, social media ecosystems can naturally segregate into clusters of like-minded individuals, reinforcing the very dynamics of polarization that have been widely criticized \citep{bakshy2015exposure,sunstein2018republic}. This suggests that algorithmic personalization, though often deployed to enhance user satisfaction, may inadvertently fuel online segregation by drawing users into homophilous communities.

However, our model also points to opportunities for intervention. By adjusting platform features or modifying recommendation algorithms to promote exposure to diverse viewpoints, it may be possible to foster environments where a dominant, more inclusive platform prevails. For example, algorithms that intentionally balance between reinforcing social approval and offering diverse content could create spaces where users are more likely to encounter opposing views, potentially reducing polarization \citep{lorenzspreen2021algorithmic,willaert2021opinion}. However, while our model shows that exposure to diverse opinions can help reduce polarization within a dominant platform, it does not necessarily follow that smaller platforms can be prevented from becoming ideologically extreme through content moderation policies alone. In fact, smaller platforms may attract niche user groups precisely because of their more focused, often extreme, content, and the marginalization of these platforms into more radical spaces could be an inevitable consequence of market dynamics rather than content policies. Efforts to reduce the visibility of extreme content on larger platforms, for instance, might push certain user groups toward alternative platforms, further entrenching ideological divides online \citep{cinelli2021echo}.

This highlights a significant challenge: while promoting content diversity and algorithmic balance on larger platforms could foster more inclusive spaces, it is unlikely that such efforts would fully prevent the concentration of extreme views on smaller platforms. As users seek spaces that align more closely with their ideological preferences, particularly when faced with moderation on larger platforms, smaller, less moderated platforms may become hubs for more polarized and extreme discourse. This suggests that platform designers and policymakers need to carefully consider the trade-offs involved in content moderation and algorithmic design, especially as they relate to the broader platform ecosystem.

\section{Conclusion}

In this paper, we introduced a dynamic model of platform choice, extending Social Feedback Theory to capture how users’ decisions to engage with particular platforms contribute to the broader online ecosystem.
Through a rigorous mathematical analysis, including bifurcation analysis, and agent-based simulations, we uncovered two key equilibrium outcomes: one where users polarize into echo chambers on separate platforms, and another where a dominant platform fosters diversity by accommodating opposing opinions, marginalizing all other platforms to extremity. Our model offers new insights into how user preferences for diversity or social approval drive the evolution of online environments.

%We must redesign our information ecosystem in the 21st century

This approach provides a new perspective on online polarization by focusing on the adaptive choices users with different opinions make across competing platforms, rather than restricting dynamics to a single platform environment. 
It highlights the complex interplay between adaptive user preferences and emergent platform structure, demonstrating that even when users value diversity, the feedback loops between platform selection and social feedback can still lead to segregation.
In doing so, our model sheds light on the digital public sphere, illustrating how user motivations for either social reinforcement or diverse interactions can shape the kinds of discourse that unfold across platforms. Our findings underscore the co-evolutionary relationship between individual behavior and platform composition, revealing how user motivations, path dependencies, and initial conditions collectively influence the structure and inclusivity of online public spaces.

Overall, this study offers both theoretical and practical insights into the design of social media platforms. It suggests that platform choices driven by user preferences play a significant role in structuring digital public spaces, with implications for mitigating polarization. Future research will build on this framework by integrating more realistic user identities and algorithmic factors, contributing to the design of robust strategies for fostering a sustainable digital ecosystem.

\subsection*{Acknowledgements}

This project has received funding from the European Union’s Horizon Europe programme under grant agreement No 101094752 (SoMe4Dem).

\bibliographystyle{apalike}
\bibliography{references.bib}

\clearpage
\onecolumn
\appendix

\section{Derivation of Jacobian}
\label{sec:appendixa}
In this section, we analytically derive the Jacobian matrix $J$ associated with the $2D$ ODE system in Equation \ref{eq:B:2Dsystem}. In general, fixed points in a dynamical system can be either \emph{stable} or \emph{unstable}. The former implies that dynamics within a local neighborhood will exponentially converge to the fixed point. On the other hand, the latter corresponds to a lack of convergence. The Jacobian represents a linear approximation of a dynamical system and provides an understanding of underlying stability. Specifically, we can check the stability of a candidate fixed point by substituting its coordinate values into the matrix $J$ and then checking its eigenvalues. If all eigenvalues are negative, then we have a stable fixed point; otherwise, we have an unstable fixed point \citep{chasnov_nonlinear_2019}. 

\noindent
\newline
To derive the Jacobian, we first define the functions $f(\Delta Q^+, \Delta Q^-) = \dot{\Delta Q^+}$ and $g(\Delta Q^+, \Delta Q^-) = \dot{\Delta Q^-}$. Then, finding the Jacobian for Equation \ref{eq:B:2Dsystem} entails finding the following set of derivatives:

\begin{equation}
\label{eq:jacobianODEsimplified} 
J = \left[\begin{array}{cc}
     \frac{\partial f}{\partial\Delta Q^+} & \frac{\partial f}{\partial\Delta Q^-} \\
    \frac{\partial g}{\partial\Delta Q^+} & \frac{\partial g}{\partial\Delta Q^-} \\
\end{array}\right]
\end{equation}

\noindent
We begin by expanding out functions $f$ and $g$. For the former, we will have:

\begin{equation}
\label{eq:expandedfunctionf} 
\begin{split}
    f(\Delta Q^+, \Delta Q^-) &= \alpha(r_2^+ - r_1^+ - \Delta Q^+) \\
    &= \alpha([(1 - \mu)O_2 + \mu(1 - |O_2|)] - [(1 - \mu) O_1 + \mu (1 - |O_1|)] - \Delta Q^+)
\end{split}
\end{equation}

\noindent
Similarly, we will have the following for $g$:

\begin{equation}
\label{eq:expandedfunctiong} 
\begin{split}
    g(\Delta Q^+, \Delta Q^-) &= \alpha(r_2^- - r_1^- - \Delta Q^-) \\
    &= \alpha([-(1 - \mu)O_2 + \mu(1 - |O_2|)] - [-(1 - \mu) O_1 + \mu (1 - |O_1|)] - \Delta Q^-)
\end{split}
\end{equation}

\noindent
We can then compute derivatives using the chain rule. 

\begin{equation}
\label{eq:jacobianderivativesf1} 
\begin{split}
    \frac{\partial f}{\partial\Delta Q^+} &= \alpha\left(\left[\left(1 - \mu\right)\frac{\partial O_2}{\partial \Delta Q^+}\right.\right. + \left.\mu\left(1 - \text{sign}\left(O_2\right) \cdot \frac{\partial O_2}{\partial \Delta Q^+}\right)\right] \\\ 
    &- \left[\left(1 - \mu\right) \frac{\partial O_1}{\partial \Delta Q^+} \right.\left. \left. + \mu \left(1 - \text{sign}\left(O_1\right) \cdot \frac{\partial O_1}{\partial \Delta Q^+}\right)\right] - 1\right) \\
\end{split}
\end{equation}

\begin{equation}
\label{eq:jacobianderivativesf2} 
\begin{split}
    \frac{\partial f}{\partial\Delta Q^-} &= \alpha\left(\left[\left(1 - \mu\right)\frac{\partial O_2}{\partial \Delta Q^-}\right.\right. + \left.\mu\left(1 - \text{sign}\left(O_2\right) \cdot \frac{\partial O_2}{\partial \Delta Q^-}\right)\right] \\
    &- \left[\left(1 - \mu\right) \frac{\partial O_1}{\partial \Delta Q^-} \right.\left. \left. + \mu \left(1 - \text{sign}\left(O_1\right) \cdot \frac{\partial O_1}{\partial \Delta Q^-}\right)\right]\right) \\
\end{split}
\end{equation}

\begin{equation}
\label{eq:jacobianderivativesg1} 
\begin{split}
    \frac{\partial g}{\partial\Delta Q^+} &= \alpha\left(\left[-\left(1 - \mu\right)\frac{\partial O_2}{\partial \Delta Q^+}\right.\right. + \left.\mu\left(1 - \text{sign}\left(O_2\right) \cdot \frac{\partial O_2}{\partial \Delta Q^+}\right)\right] \\
    &- \left[-\left(1 - \mu\right) \frac{\partial O_1}{\partial \Delta Q^+} \right. \left. \left. + \mu \left(1 - \text{sign}\left(O_1\right) \cdot \frac{\partial O_1}{\partial \Delta Q^+}\right)\right]\right) \\
\end{split}
\end{equation}

\begin{equation}
\label{eq:jacobianderivativesg2} 
\begin{split}
    \frac{\partial g}{\partial\Delta Q^-} &= \alpha\left(\left[-\left(1 - \mu\right)\frac{\partial O_2}{\partial \Delta Q^-}\right.\right. + \left.\mu\left(1 - \text{sign}\left(O_2\right) \cdot \frac{\partial O_2}{\partial \Delta Q^-}\right)\right] \\
    &- \left[-\left(1 - \mu\right) \frac{\partial O_1}{\partial \Delta Q^-} \right.\left. \left. + \mu \left(1 - \text{sign}\left(O_1\right) \cdot \frac{\partial O_1}{\partial \Delta Q^-}\right)\right] - 1\right) \\
\end{split}
\end{equation}

\noindent
\newline
Here, the function $\text{sign}(x) = \begin{cases}
    1 & x \geq 0 \\
    0 & x < 0
\end{cases}$. Note that the presence of the absolute value function in $f$ and $g$ implies that the Jacobian is not differentiable at $x = 0$.

\noindent
\newline
It remains to compute the derivatives $\frac{\partial O_1}{\partial \Delta Q^+}, \frac{\partial O_1}{\partial \Delta Q^-}, \frac{\partial O_2}{\partial \Delta Q^+}, \frac{\partial O_2}{\partial \Delta Q^+}$. To do so, we first derive an expression for the \textit{opinions} $O_1$ and $O_2$ from Equation \ref{eq:B:platformopinion} in terms of the coordinate values $\Delta Q^+$ and $\Delta Q^-$. This is done by using the definition of probabilities from Equation \ref{eq:B:selectionprobsplus} then simplifying:

\begin{equation}
\label{eq:averageQvalueplatformopinionQvalues}
\begin{array}{c}
     O_1(\Delta Q^+, \Delta Q^-) = \frac{e^{\beta \Delta Q^-} - e^{\beta \Delta Q^+}}{2 + e^{\beta \Delta Q^+} + e^{\beta \Delta Q^-}}  \\
     O_2(\Delta Q^+, \Delta Q^-) = \frac{e^{-\beta \Delta Q^-} - e^{-\beta \Delta Q^+}}{2 + e^{-\beta \Delta Q^+} + e^{-\beta \Delta Q^-}}
\end{array}
\end{equation}

\noindent
We now compute the associated derivatives using the chain rule and quotient rule. For notational convenience we define the numerators from Equation \ref{eq:averageQvalueplatformopinionQvalues} as $\mathcal{N}_{O_1} := e^{\beta \Delta Q^-} - e^{\beta \Delta Q^+}$ and $\mathcal{N}_{O_2} := e^{-\beta \Delta Q^-} - e^{-\beta \Delta Q^+}$. We additionally define the denominators from Equation \ref{eq:averageQvalueplatformopinionQvalues} as $\mathcal{D}_{O_1} := 2 + e^{\beta \Delta Q^+} + e^{\beta \Delta Q^-}$ and $\mathcal{D}_{O_2} := 2 + e^{-\beta \Delta Q^+} + e^{-\beta \Delta Q^-}$.

\begin{equation}
\label{eq:jacobianderivativeso1qplus} 
\begin{split}
    \frac{\partial O_1}{\partial\Delta Q^+} &= \frac{(-\beta e^{\beta \Delta Q^+}) \cdot \mathcal{D}_{O_1} - \mathcal{N}_{O_1} \cdot (\beta e^{\beta \Delta Q^+})}{(\mathcal{D}_{O_1})^2} \\
    &= \frac{-2\beta e^{\beta \Delta Q^+} (1 + e^{\beta \Delta Q^-})}{(2 + e^{\beta \Delta Q^+} + e^{\beta \Delta Q^-})^2}
\end{split}
\end{equation}
\begin{equation}
\label{eq:jacobianderivativeso1qminus} 
\begin{split}
    \frac{\partial O_1}{\partial\Delta Q^-} &= \frac{(\beta e^{\beta \Delta Q^-}) \cdot \mathcal{D}_{O_1} - \mathcal{N}_{O_1} \cdot (\beta e^{\beta \Delta Q^-})}{(\mathcal{D}_{O_1})^2} \\
    &= \frac{2\beta e^{\beta \Delta Q^-} (1 + e^{\beta \Delta Q^+})}{(2 + e^{\beta \Delta Q^+} + e^{\beta \Delta Q^-})^2}
\end{split}
\end{equation}

\begin{equation}
\label{eq:jacobianderivativeso2qplus} 
\begin{split}
    \frac{\partial O_2}{\partial\Delta Q^+} &= \frac{(\beta e^{-\beta \Delta Q^+}) \cdot \mathcal{D}_{O_2} - \mathcal{N}_{O_2} \cdot (-\beta e^{-\beta \Delta Q^+})}{(\mathcal{D}_{O_2})^2} \\
    &= \frac{2\beta e^{-\beta \Delta Q^+} (1 + e^{-\beta \Delta Q^-})}{(2 + e^{-\beta \Delta Q^+} + e^{-\beta \Delta Q^-})^2}
\end{split}
\end{equation}
\begin{equation}
\label{eq:jacobianderivativeso2qminus} 
\begin{split}
    \frac{\partial O_2}{\partial\Delta Q^-} &= \frac{(-\beta e^{-\beta \Delta Q^-}) \cdot \mathcal{D}_{O_2} - \mathcal{N}_{O_2} \cdot (-\beta e^{-\beta \Delta Q^-})}{(\mathcal{D}_{O_2})^2} \\
    &= \frac{-2\beta e^{-\beta \Delta Q^-} (1 + e^{-\beta \Delta Q^+})}{(2 + e^{-\beta \Delta Q^+} + e^{-\beta \Delta Q^-})^2}
\end{split}
\end{equation}

\noindent
We can then plug these derivatives into Equations \ref{eq:jacobianderivativesf1}, \ref{eq:jacobianderivativesf2}, \ref{eq:jacobianderivativesg1}, and \ref{eq:jacobianderivativesg2} to get the analytically defined Jacobian in terms of the coordinate values $\Delta Q^+$ and $\Delta Q^-$.

%%%%%%%%%%%%%%%%%%%%%%%%%%%%%%%%%%%%%%%%%%%%%%%%%%%%%%%%%%%%%%%
% APPENDIX B
%%%%%%%%%%%%%%%%%%%%%%%%%%%%%%%%%%%%%%%%%%%%%%%%%%%%%%%%%%%%%%%
%\section{$2D$ ODE bifurcation theoretical observations}
\section{Equivalence of bifurcation plots}
\label{sec:appendixb}

%We now prove the claims made in Section \ref{subsec:odebifurcationgenerate}.

%\symmetricone*

We now prove that the bifurcation plot (Figure \ref{fig:globalsystemanalysis}, Panel A) is equivalent for  $\Delta Q^+$ and $\Delta Q^-$ because the system is symmetric. To start, we first demonstrate the following claims involving symmetry of fixed points in the dynamical system.

\begin{restatable}[(Fixed points are symmetric across line $y = x$)]{claim}{symmetricone}
\label{claim:symmetric1}

Suppose that a point $\Delta Q^+ = x$ and $\Delta Q^- = y$ for $x, y \in \mathbb{R}$ is a fixed point for the system in Equation \ref{eq:B:2Dsystem}. Then, the point $\Delta Q^+ = y$ and $\Delta Q^- = x$ is also a fixed point.

\end{restatable}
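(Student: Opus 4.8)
The plan is to exploit the structural symmetry of the reward equations \eqref{eq:B:rewardsDE} under simultaneous exchange of the two opinion labels and the two Q-value differences. Concretely, I would define the swap map $\sigma(\Delta Q^+, \Delta Q^-) = (\Delta Q^-, \Delta Q^+)$ and show that the vector field $F = (f,g)$ defined by \eqref{eq:B:2Dsystem} commutes with $\sigma$ in the sense that $F \circ \sigma = \sigma \circ F$; once this equivariance is established, fixed points of $F$ are mapped to fixed points of $F$ by $\sigma$, which is exactly the claim.

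First I would examine how the platform opinions transform under $\sigma$. From the explicit expressions in \eqref{eq:averageQvalueplatformopinionQvalues}, swapping $\Delta Q^+ \leftrightarrow \Delta Q^-$ turns $O_1(\Delta Q^+,\Delta Q^-)$ into $-O_1(\Delta Q^-,\Delta Q^+)$ evaluated with the arguments swapped — that is, $O_1 \mapsto -O_1$ and likewise $O_2 \mapsto -O_2$ under the swap, because the numerators $e^{\beta \Delta Q^-} - e^{\beta \Delta Q^+}$ change sign while the denominators are symmetric. Next I would feed this sign flip into the rewards: from \eqref{eq:B:rewardsDE}, $r^+_k = (1-\mu)O_k + \mu(1-|O_k|)$ becomes, after $O_k \mapsto -O_k$, exactly $-(1-\mu)O_k + \mu(1-|O_k|) = r^-_k$, and symmetrically $r^-_k \mapsto r^+_k$. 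Therefore, under $\sigma$, the quantity $r^+_2 - r^+_1$ (which drives $\dot{\Delta Q}^+$) maps to $r^-_2 - r^-_1$ (which drives $\dot{\Delta Q}^-$), and the linear terms $-\Delta Q^+$ and $-\Delta Q^-$ swap as well. This is precisely the statement $F \circ \sigma = \sigma \circ F$.

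Having established equivariance, the conclusion is immediate: if $(x,y)$ satisfies $F(x,y) = (0,0)$, then $F(\sigma(x,y)) = F(y,x) = \sigma(F(x,y)) = \sigma(0,0) = (0,0)$, so $(y,x)$ is also a fixed point. I would present the argument in roughly this order: (i) record the transformation $O_k(\Delta Q^-, \Delta Q^+) = -O_k(\Delta Q^+,\Delta Q^-)$ directly from \eqref{eq:averageQvalueplatformopinionQvalues}; (ii) deduce $r^{\pm}_k \mapsto r^{\mp}_k$ from \eqref{eq:B:rewardsDE}; (iii) conclude $f(y,x) = g(x,y)$ and $g(y,x) = f(x,y)$; (iv) read off the claim.

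The main obstacle is mostly bookkeeping rather than conceptual: one must be careful that the absolute value $|O_k|$ is manifestly invariant under $O_k \mapsto -O_k$ (so the diversity term $\mu(1-|O_k|)$ is unchanged and only the social-approval term flips sign), and one must track that the swap acts on the \emph{arguments} of $O_k$, not just on its value — i.e. that it is the identity $O_1(\Delta Q^-, \Delta Q^+) = -O_1(\Delta Q^+, \Delta Q^-)$ as functions, which requires glancing at the closed forms rather than the abstract definition \eqref{eq:B:platformopinion}. A minor subtlety worth a remark is the non-differentiability of $F$ along $O_k = 0$ noted in Appendix \ref{sec:appendixa}; this does not affect the claim, since the equivariance is an identity of continuous functions and fixed points are defined by $F = 0$, not by any derivative condition.
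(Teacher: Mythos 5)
Your proposal is correct and follows essentially the same route as the paper's proof: both hinge on the identities $O_k(y,x) = -O_k(x,y)$ read off from Equation \ref{eq:averageQvalueplatformopinionQvalues}, the resulting swap $r^{\pm}_k \mapsto r^{\mp}_k$ (with the diversity term invariant under the sign flip), and the conclusion $f(y,x) = g(x,y)$, $g(y,x) = f(x,y)$. Your packaging of these identities as equivariance of the vector field under the swap map $\sigma$ is a slightly more conceptual phrasing of what the paper verifies by direct substitution, but the underlying argument is the same.
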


\begin{proof}
    We are given a fixed point $\Delta Q^+ = x$ and $\Delta Q^- = y$. Then, leveraging the notation from Equations \ref{eq:expandedfunctionf} and \ref{eq:expandedfunctiong} from Appendix \ref{sec:appendixa} we note that this implies:

    \[f(x, y) = g(x, y) = 0\]

    \noindent
    \newline
    We now evaluate $f(y, x)$ and $g(y, x)$. To do so, we first make some observations on the \textit{opinion} expressions from Equation \ref{eq:averageQvalueplatformopinionQvalues}. Specifically, we can evaluate $O_1(y, x)$ as follows:

    \eqnsplit{
        O_1(y, x) &= \frac{e^{\beta x} - e^{\beta y}}{2 + e^{\beta y} + e^{\beta x}} \\
        &= -\left(\frac{e^{\beta y} - e^{\beta x}}{2 + e^{\beta y} + e^{\beta x}}\right) \\
        &= -O_1(x, y)
    }

    \noindent
    Additionally, we will have:

    \eqnsplit{
        O_2(y, x) &= \frac{e^{-\beta x} - e^{-\beta y}}{2 + e^{-\beta y} + e^{-\beta x}} \\
        &= -\left(\frac{e^{-\beta y} - e^{-\beta x}}{2 + e^{-\beta y} + e^{-\beta x}}\right) \\
        &= -O_2(x, y)
    }

    \noindent
    We now can write out an expression for $f(y, x)$ using Equation \ref{eq:expandedfunctionf}:

    \eqnsplit{
        f(y, x) &= \alpha(r_2^+ - r_1^+ - y) \\
        &= \alpha([(1 - \mu)O_2(y, x) + \mu(1 - |O_2(y, x)|)] - [(1 - \mu) O_1(y, x) + \mu (1 - |O_1(y, x)|)] - y) \\
        &= \alpha([-(1 - \mu)O_2(x, y) + \mu(1 - |O_2(x, y)|)] - [-(1 - \mu) O_1(x, y) + \mu (1 - |O_1(x, y)|)] - y) \\
        &= g(x, y) \\
        &= 0
    }

    \noindent
    Similarly we can write out an expression for $g(y, x)$ using Equation \ref{eq:expandedfunctiong}:

    \eqnsplit{
        g(y, x) &= \alpha(r_2^- - r_1^- - x) \\
        &= \alpha([-(1 - \mu)O_2(y, x) + \mu(1 - |O_2(y, x)|)] - [-(1 - \mu) O_1(y, x) + \mu (1 - |O_1(y, x)|)] - x) \\
        &= \alpha([(1 - \mu)O_2(x, y) + \mu(1 - |O_2(x, y)|)] - [(1 - \mu) O_1(x, y) + \mu (1 - |O_1(x, y)|)] - x) \\
        &= f(x, y) \\
        &= 0
    }

    \noindent
    Thus, we have shown that $f(y, x) = g(y, x) = 0$. This implies that the coordinate pair $\Delta Q^+ = y$ and $\Delta Q^- = x$ is a fixed point.
    
\end{proof}

\begin{restatable}[(Fixed points are symmetric across line $y = -x$)]{claim}{symmetrictwo}
\label{claim:symmetric2}

Suppose that a point $\Delta Q^+ = x$ and $\Delta Q^- = y$ for $x, y \in \mathbb{R}$ is a fixed point for the system in Equation \ref{eq:B:2Dsystem}. Then, the point $\Delta Q^+ = -y$ and $\Delta Q^- = -x$ is also a fixed point.

\end{restatable}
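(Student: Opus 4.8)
The plan is to mirror, almost verbatim, the argument used for Claim \ref{claim:symmetric1}, exploiting the closed-form expressions for the platform opinions in Equation \ref{eq:averageQvalueplatformopinionQvalues}. The first step is to record how $O_1$ and $O_2$ transform under the reflection $(x,y)\mapsto(-y,-x)$. Substituting $\Delta Q^+ = -y$ and $\Delta Q^- = -x$ into Equation \ref{eq:averageQvalueplatformopinionQvalues}, the exponents $e^{\beta \Delta Q^\pm}$ become $e^{-\beta x}, e^{-\beta y}$, which are exactly the exponentials appearing in $O_2(x,y)$, while the numerator picks up a minus sign; the analogous substitution for $O_2(-y,-x)$ produces the exponentials of $O_1(x,y)$ with a flipped numerator. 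This yields the two identities $O_1(-y,-x) = -O_2(x,y)$ and $O_2(-y,-x) = -O_1(x,y)$.

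Next I would substitute these identities into the expanded forms of $f$ and $g$ from Equations \ref{eq:expandedfunctionf} and \ref{eq:expandedfunctiong}, evaluated at the point $(-y,-x)$. The crucial observation is that the diversity contributions $\mu(1-|O_k|)$ are invariant under a global sign flip of $O_k$, since $|-z| = |z|$, whereas the approval contributions $(1-\mu)O_k$ acquire a minus sign. Consequently the combination $r_2^+ - r_1^+$ evaluated at $(-y,-x)$ equals the negative of $r_2^- - r_1^-$ evaluated at $(x,y)$, and the linear term $-\Delta Q^+ = y$ in $f(-y,-x)$ is the negative of the linear term $-\Delta Q^- = -y$ appearing in $g(x,y)$. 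Collecting terms gives $f(-y,-x) = -g(x,y)$, and the symmetric computation (using $-\Delta Q^- = x$ this time) gives $g(-y,-x) = -f(x,y)$. Since $(x,y)$ is assumed to be a fixed point, $f(x,y) = g(x,y) = 0$, hence $f(-y,-x) = g(-y,-x) = 0$, which is precisely the statement that $(\Delta Q^+,\Delta Q^-) = (-y,-x)$ is a fixed point.

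I do not expect a genuine obstacle here; the proof is routine sign-bookkeeping once the transformation rules for $O_1$ and $O_2$ are established. The only points requiring care are getting the swap-and-negate of the exponentials right, so that $O_1$ at the reflected point is matched to $-O_2$ at the original point (not to $-O_1$), and keeping track of the fact that the absolute-value (diversity) terms are fixed by the reflection while the approval and linear terms are what force the cross-identity $f\leftrightarrow -g$ rather than $f\leftrightarrow f$; this is the same structural feature that drives the proof of Claim \ref{claim:symmetric1}.

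Finally, I would note that Claims \ref{claim:symmetric1} and \ref{claim:symmetric2} together show that the fixed-point set of Equation \ref{eq:B:2Dsystem} is invariant under the reflections across both $y=x$ and $y=-x$, and hence under their composition $(x,y)\mapsto(-x,-y)$. Combined with the fact that the Jacobian of Appendix \ref{sec:appendixa} transforms consistently under these reflections (so that stability type is preserved), this is what makes the branch of fixed points parametrized by $\Delta Q^-$ a mirror image of the branch parametrized by $\Delta Q^+$, thereby justifying that Panel A of Figure \ref{fig:globalsystemanalysis} plots only $\Delta Q^+$.
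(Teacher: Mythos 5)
Your proposal is correct and follows essentially the same route as the paper's proof: establish the transformation identities $O_1(-y,-x) = -O_2(x,y)$ and $O_2(-y,-x) = -O_1(x,y)$ from Equation \ref{eq:averageQvalueplatformopinionQvalues}, then substitute into $f$ and $g$ to obtain $f(-y,-x) = -g(x,y) = 0$ and $g(-y,-x) = -f(x,y) = 0$. The closing remarks about composing the two reflections and about the bifurcation diagram go slightly beyond what this claim requires, but they match the paper's use of the claim in Lemma \ref{lemma:bifurcationsymmetric}.
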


\begin{proof}
    We are given a fixed point $\Delta Q^+ = x$ and $\Delta Q^- = y$. Then, leveraging the notation from Equations \ref{eq:expandedfunctionf} and \ref{eq:expandedfunctiong} from Appendix \ref{sec:appendixa} we note that this implies:

    \[f(x, y) = g(x, y) = 0\]

    \noindent
    \newline
    We now evaluate $f(-y, -x)$ and $g(-y, -x)$. To do so, we first make some observations on the \textit{opinion} expressions from Equation \ref{eq:averageQvalueplatformopinionQvalues}. Specifically, we can evaluate $O_1(-y, -x)$ as follows:

    \eqnsplit{
        O_1(-y, -x) &= \frac{e^{-\beta x} - e^{-\beta y}}{2 + e^{-\beta y} + e^{-\beta x}} \\
        &= -\left(\frac{e^{-\beta y} - e^{-\beta x}}{2 + e^{-\beta y} + e^{-\beta x}}\right) \\
        &= -O_2(x, y)
    }

    \noindent
    Additionally, we will have:

    \eqnsplit{
        O_2(-y, -x) &= \frac{e^{\beta x} - e^{\beta y}}{2 + e^{\beta y} + e^{\beta x}} \\
        &= -\left(\frac{e^{\beta y} - e^{\beta x}}{2 + e^{\beta y} + e^{\beta x}}\right) \\
        &= -O_1(x, y)
    }

    \noindent
    We now can write out an expression for $f(-y, -x)$ using Equation \ref{eq:expandedfunctionf}:

    \eqnsplit{
        f(-y, -x) &= \alpha(r_2^+ - r_1^+ + y) \\
        &= \alpha([(1 - \mu)O_2(-y, -x) + \mu(1 - |O_2(-y, -x)|)] - [(1 - \mu) O_1(-y, -x) + \mu (1 - |O_1(-y, -x)|)] + y) \\
        &= \alpha([-(1 - \mu)O_1(x, y) + \mu(1 - |O_1(x, y)|)] - [-(1 - \mu) O_2(x, y) + \mu (1 - |O_2(x, y)|)] + y) \\
        &= -\alpha([-(1 - \mu)O_2(x, y) + \mu(1 - |O_2(x, y)|)] - [-(1 - \mu) O_1(x, y) + \mu (1 - |O_1(x, y)|)] - y) \\
        &= -g(x, y) \\
        &= 0
    }

    \noindent
    Similarly we can write out an expression for $g(-y, -x)$ using Equation \ref{eq:expandedfunctiong}:

    \eqnsplit{
        g(-y, -x) &= \alpha(r_2^- - r_1^- + x) \\
        &= \alpha([-(1 - \mu)O_2(-y, -x) + \mu(1 - |O_2(-y, -x)|)] - [-(1 - \mu) O_1(-y, -x) + \mu (1 - |O_1(-y, -x)|)] + x) \\
        &= \alpha([(1 - \mu)O_1(x, y) + \mu(1 - |O_1(x, y)|)] - [(1 - \mu) O_2(x, y) + \mu (1 - |O_2(x, y)|)] + x) \\
        &= -\alpha([(1 - \mu)O_2(x, y) + \mu(1 - |O_2(x, y)|)] - [(1 - \mu) O_1(x, y) + \mu (1 - |O_1(x, y)|)] - x) \\
        &= -f(x, y) \\
        &= 0
    }

    \noindent
    Thus, we have shown that $f(-y, -x) = g(-y, -x) = 0$. This implies that the coordinate pair $\Delta Q^+ = -y$ and $\Delta Q^- = -x$ is a fixed point.
    
\end{proof}

\noindent
\newline
Using these two claims, we can make the following statement.

\begin{restatable}[(Bifurcation plots for $\Delta Q^+$ and $\Delta Q^-$ are equivalent)]{lemma}{bifurcationsymmetric}
\label{lemma:bifurcationsymmetric}

The bifurcation diagrams for $\Delta Q^+$ and $\Delta Q^-$ as a function of $\mu$ are both symmetric across the value $0$. Furthermore, the bifurcation diagrams will be equivalent.

\end{restatable}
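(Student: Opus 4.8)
The plan is to leverage the two symmetry claims already established, namely Claim \ref{claim:symmetric1} (fixed points are symmetric across $y=x$) and Claim \ref{claim:symmetric2} (fixed points are symmetric across $y=-x$), to show that the set of fixed points, viewed as a set of $\Delta Q^+$-values for each fixed $\mu$, coincides with the set of $\Delta Q^-$-values, and moreover that each of these sets is closed under negation. Composing the two symmetries is the key observation: if $(\Delta Q^+,\Delta Q^-)=(x,y)$ is a fixed point, then by Claim \ref{claim:symmetric1} so is $(y,x)$, and then applying Claim \ref{claim:symmetric2} to $(y,x)$ yields that $(-x,-y)$ is a fixed point. Thus the fixed-point set is invariant under the map $(x,y)\mapsto(-x,-y)$. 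In particular, for a given $\mu$, the collection of $\Delta Q^+$-coordinates of fixed points is invariant under $x\mapsto -x$, which is exactly the statement that the bifurcation diagram for $\Delta Q^+$ is symmetric across $0$; the same argument applied to the second coordinate gives symmetry of the $\Delta Q^-$ diagram.

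Next I would show the two diagrams are not merely each individually symmetric but literally equal as subsets of the $(\mu,\Delta Q)$-plane. Fix $\mu$ and let $S_\mu \subseteq \mathbb{R}^2$ denote the set of fixed points. The $\Delta Q^+$-diagram at $\mu$ is the projection $\pi_1(S_\mu) = \{x : (x,y)\in S_\mu \text{ for some } y\}$ and the $\Delta Q^-$-diagram is $\pi_2(S_\mu) = \{y : (x,y)\in S_\mu \text{ for some } x\}$. Claim \ref{claim:symmetric1} says precisely that $S_\mu$ is invariant under the coordinate swap $(x,y)\mapsto(y,x)$, and swap-invariance of $S_\mu$ immediately forces $\pi_1(S_\mu) = \pi_2(S_\mu)$: any $x \in \pi_1(S_\mu)$ is witnessed by some $(x,y)\in S_\mu$, hence $(y,x)\in S_\mu$, hence $x \in \pi_2(S_\mu)$, and symmetrically. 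Since this holds for every $\mu$, the two bifurcation diagrams agree pointwise in $\mu$, i.e.\ they are equivalent. I would also remark that stability is preserved under these symmetries — the swap map $(x,y)\mapsto(y,x)$ and the negation map are linear involutions, so they conjugate the Jacobian to a matrix with the same spectrum (indeed one can check directly from the block structure of $J$ in Appendix \ref{sec:appendixa}, using the relations $O_1(y,x)=-O_1(x,y)$ and $O_2(y,x)=-O_2(x,y)$ derived there, that $f$ and $g$ swap roles), so stable points map to stable points and unstable to unstable; this ensures the colored (stable/unstable) structure of the diagrams also coincides, not just the point sets.

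I do not expect a serious obstacle here — the real content was already done in Claims \ref{claim:symmetric1} and \ref{claim:symmetric2}, and what remains is the short logical bookkeeping of composing the two symmetries and translating set-invariance into statements about projections. The one place requiring a little care is the precise definition of what "the bifurcation diagram" is taken to be: one must be explicit that it is the projection onto one coordinate of the full fixed-point locus in $(\mu,\Delta Q^+,\Delta Q^-)$-space, because the symmetry arguments act on that full locus rather than on either projection individually. A secondary point worth a sentence is that the non-differentiability of the Jacobian along $\Delta Q^+=0$ and $\Delta Q^-=0$ noted in Appendix \ref{sec:appendixa} does not interfere with the symmetry statements, since the symmetries are properties of the zero set of $(f,g)$, which are themselves continuous (the $|O_k|$ terms are continuous), so the fixed-point set is well-defined and the involutions map it to itself regardless of where $J$ fails to be smooth.
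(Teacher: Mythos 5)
Your proposal is correct and follows essentially the same route as the paper's own proof: both apply Claim \ref{claim:symmetric1} and Claim \ref{claim:symmetric2} to generate, from any fixed point $(x,y)$, the orbit $\{(x,y),(y,x),(-x,-y),(-y,-x)\}$, and then read off that the $\Delta Q^+$ and $\Delta Q^-$ projections coincide and are each closed under negation. Your additional observations --- phrasing the conclusion in terms of projections of the fixed-point locus and checking that the swap/negation involutions preserve the Jacobian's spectrum so that the stable/unstable labelling of the branches also transfers --- go slightly beyond what the paper records and would strengthen the argument.
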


\begin{proof}
    For a given $\mu = \mu^*$, suppose that $\Delta Q^+ = x$ and $\Delta Q^- = y$ is a fixed point. Using Claim \ref{claim:symmetric1} this will imply that $\Delta Q^+ = y$ and $\Delta Q^- = x$ is also a fixed point. Finally, by using Claim \ref{claim:symmetric2} on the previous two fixed points we can determine that $\Delta Q^+ = -y$ and $\Delta Q^- = -x$ and also $\Delta Q^+ = -x$ and $\Delta Q^- = -y$ are also fixed points. 

    \noindent
    \newline
    Note this implies the bifurcation diagram for $\Delta Q^+$ at $\mu$ will have the values $\Delta Q^+ \in \{-y, -x, x, y\}$. Similarly, the bifurcation diagram for $\Delta Q^-$ at $\mu^*$ will have the values $\Delta Q^- \in \{-y, -x, x, y\}$. Given that $\mu^*$ and the fixed point $\Delta Q^+ = x$ and $\Delta Q^- = y$ were arbitrary, we conclude that the bifurcation diagrams for both $\Delta Q^+$ and $\Delta Q^-$ are symmetric across $0$. Furthermore, the diagrams will be equivalent.
\end{proof}

\end{document}